\pgfplotsset{compat=1.18}
\newcolumntype{L}[1]{>{\raggedright\let\newline\\\arraybackslash\hspace{0pt}}m{#1}}
\newcolumntype{C}[1]{>{\centering\let\newline\\\arraybackslash\hspace{0pt}}m{#1}}
\newcolumntype{R}[1]{>{\raggedleft\let\newline\\\arraybackslash\hspace{0pt}}m{#1}}
\newtheorem{theorem}{Theorem}
\newtheorem{proposition}[theorem]{Proposition}
\newtheorem{lemma}[theorem]{Lemma}
\theoremstyle{definition}
\newtheorem{definition}[theorem]{Definition}
\theoremstyle{remark}
\newtheorem{remark}[theorem]{Remark}
\Crefname{assumption}{Assumption}{Assumptions}
\numberwithin{theorem}{section}
\numberwithin{equation}{section}
\numberwithin{table}{section}
\numberwithin{figure}{section}
\definecolor{myBlue}{RGB}{30,144,255} 
\definecolor{myGreen}{RGB}{69,169,0} 
\definecolor{myRed}{RGB}{165,12,42} 
\definecolor{myOrange}{RGB}{225,92,22} 
\definecolor{mycolor0}{rgb}{0.12156862745098,0.466666666666667,0.705882352941177} 
\definecolor{mycolor1}{rgb}{0.00000,0.44700,0.74100}
\definecolor{mycolor2}{rgb}{0.85000,0.32500,0.09800}
\definecolor{mycolor3}{rgb}{0.49400,0.18400,0.55600}
\definecolor{mycolor4}{rgb}{0.92900,0.69400,0.12500}
\definecolor{mycolor5}{rgb}{0.46600,0.67400,0.18800}
\definecolor{mycolor6}{rgb}{0.30100,0.74500,0.93300}
\definecolor{mycolor7}{rgb}{0.63500,0.07800,0.18400}
\newcommand{\delete}[1]{ }
\DeclareFontFamily{U}{matha}{\hyphenchar\font45}
\DeclareFontShape{U}{matha}{m}{n}{
	<-6> matha5 <6-7> matha6 <7-8> matha7
	<8-9> matha8 <9-10> matha9
	<10-12> matha10 <12-> matha12
}{}
\DeclareSymbolFont{matha}{U}{matha}{m}{n}
\DeclareFontFamily{U}{mathx}{\hyphenchar\font45}
\DeclareFontShape{U}{mathx}{m}{n}{
	<-6> mathx5 <6-7> mathx6 <7-8> mathx7
	<8-9> mathx8 <9-10> mathx9
	<10-12> mathx10 <12-> mathx12
}{}
\DeclareSymbolFont{mathx}{U}{mathx}{m}{n}
\DeclareMathDelimiter{\vvvert} {0}{matha}{"7E}{mathx}{"17}%
\newcommand{\N}{\mathbb{N}}
\newcommand{\R}{\mathbb{R}}
\newcommand{\C}{\mathbb{C}}
\DeclareMathOperator{\poly}{poly}
\DeclareMathOperator{\SWAP}{SWAP}
\DeclareMathOperator{\Id}{Id}
\newcommand{\V}{\mathcal{V}}
\newcommand{\Q}{\mathcal{Q}}
\newcommand{\dx}{\, \mathrm{d}x}
\newcommand{\bigO}{\mathcal{O}}
\newcommand{\tol}{\mathtt{tol}}
\newcommand{\CNOT}[1]{\operatorname{C}_{#1}\!\operatorname{NOT}}
\newcommand{\keff}{\kappa_\text{eff}}
\begin{document}
\title[Quantum Realization of the Finite Element Method]{Quantum Realization of the Finite Element Method}
\author[]{M.~Deiml$^\dagger$, D.~Peterseim$^{\dagger*}$}
\address{${}^{\dagger}$ Institute of Mathematics, University of Augsburg, Universit\"atsstr.~12a, 86159 Augsburg, Germany}
\address{${}^{*}$ Centre for Advanced Analytics and Predictive Sciences (CAAPS), University of Augsburg, Universit\"atsstr. 12a, 86159 Augsburg, Germany}
\thanks{The work of D. Peterseim is part of a project that has received funding from the European Research Council (ERC) under the European Union’s Horizon 2020 research and innovation programme (Grant agreement No. 865751 – RandomMultiScales).}
\email{\{matthias.deiml,daniel.peterseim\}@uni-a.de}
%
\date{\today}

\subjclass[2020]{Primary 68Q12, 65N30, 65N55, 65F08}
\keywords{}
%
%
\begin{abstract}
This paper presents a quantum algorithm for the solution of prototypical second-order linear elliptic partial differential equations discretized by $d$-linear finite elements on Cartesian grids of a bounded $d$-dimensional domain. An essential step in the construction is a BPX preconditioner, which transforms the linear system into a sufficiently well-conditioned one, making it amenable to quantum computation. We provide a constructive proof demonstrating that, for any fixed dimension, our quantum algorithm can compute suitable functionals of the solution to a given tolerance $\tol$ with an optimal complexity of order $\tol^{-1}$ up to logarithmic terms, significantly improving over existing approaches. Notably, this approach does not rely on the regularity of the solution and achieves a quantum advantage over classical solvers already in two dimensions, whereas prior quantum methods required at least four dimensions for asymptotic benefits. We further detail the design and implementation of a quantum circuit capable of executing our algorithm, present simulator results, and report numerical experiments on current quantum hardware, confirming the feasibility of preconditioned finite element methods for near-term quantum computing.
\end{abstract}
%
\maketitle

\section{Introduction}
Quantum computers have the potential to provide an exponential speed-up over classical computing paradigms, a prospect that holds particular promise for the field of computational mathematics. Among the most compelling applications of this quantum advantage is the solution of partial differential equations (PDEs) \cite{MP16}, especially in high dimensions. The ability to solve PDEs more efficiently could improve numerical simulation in numerous fields, including engineering and physics.

A prime example within the vast landscape of PDEs is the steady-state diffusion equation
\begin{equation}\label{eq:modelproblem}
-\operatorname{div}(A\nabla u) = f,
\end{equation}
posed in a bounded domain $D \subset \R^d$, $d \in \N$, accompanied by homogeneous Dirichlet conditions $u = 0$ at the boundary of the domain $\partial D$. This equation serves as a building block for understanding various processes, from heat conduction and material diffusion for $d \le 3$ to high-dimensional problems in quantum chemistry where $d \gg 3$, underscoring the broad implications of advancing its solution methods.

The conventional approach to discretizing the steady-state diffusion equation \eqref{eq:modelproblem} utilizes the Galerkin finite element method. In cases of axis-aligned domains, such as the unit cube~$D = [0,1]^d$, the method of choice often involves continuous piecewise $d$-linear~(Q1) finite elements on a Cartesian grid composed of $n\times\ldots\times n=n^d$ cells. Even under regularity assumptions on the symmetric and positive definite diffusion coefficient~$A$ and the right-hand side function~$f$ that ensure the $H^2(D)$-regularity of the solution~$u=u_{A,f}$, the grid resolution $n$ must scale inversely with the tolerance $\tol > 0$ to reliably push the approximation error below this threshold. Consequently, achieving the desired tolerance~$\tol$ requires computational complexity that increases as $\tol^{-d}$. This exponential growth of complexity with respect to the physical dimension $d$ -- a phenomenon often referred to as the curse of dimensionality -- epitomizes the limitations of traditional PDE solution algorithms in classical computing systems, especially under conditions of moderate regularity~\cite{Bungartz_Griebel_2004}.

Quantum computing offers promising ways to overcome the computational hurdles posed by high-dimensional PDEs. Among the notable strategies is \emph{Schrö\-di\-nger\-iza\-tion}, which converts the original problem into a higher-dimensional Schrödinger equation solvable in theory with quantum efficiency \cite{PhysRevA.108.032603,HJZ23,HJLZ24}. Another approach taken in \cite{CPP+13, Ber14, BCOW17, CL20} is to treat the linear systems derived from PDE discretization with quantum linear system algorithms \cite{HHL09, Amb10, BCK15, CKS17, GSLW19, SSO19, LT20, AL22}. Similar methods are being explored in the context of variational quantum computing for finite element discretizations \cite{TLDE23, SWRK+23}. 
Despite their potential for logarithmic complexity in the problem size, all methods face significant challenges, most importantly the poor conditioning associated with conventional finite difference or finite element discretizations, which grows quadratically as $\tol^{-2}$ with decreasing tolerance under $H^2(D)$-regularity assumptions. It directly affects the efficiency of quantum linear system solvers, whose complexity grows linearly with the condition number and, thus, again quadratically with $\tol^{-1}$, giving a total runtime that is proportional to $\tol^{-3}$ in general. This implies that a quantum advantage with these approaches is only possible where $d > 3$, which represents a major obstacle to a direct quantum adaptation of standard computational PDE techniques in the near future. Furthermore, it does not seem to offer an advantage for highly relevant classes of challenging PDEs. More importantly, the practical use of quantum hardware is currently limited by the depth of the resulting quantum circuits, which must scale at least linearly with the condition number, adding another layer of complexity to their realization. Quantum methods with an improved runtime rely either on a periodic structure \cite{TAWL21, LOC24} or higher regularity \cite{CLO21} of the problem.

The challenges highlighted above underscore the urgent need to overcome the limitations of current quantum computing approaches to solving PDEs. This leads us directly to the critical role of preconditioning in the quantum computing context. In classical computing, preconditioning is a well-established practice aimed at accelerating iterative solvers by improving the conditioning of the problem. In quantum computing, however, the importance of preconditioning extends beyond mere acceleration. It becomes essential for maintaining the scale of the amplitude of the solution state and preventing the amplification of input errors, a phenomenon similar to error amplification in classical computations with finite-precision arithmetic. 

The straightforward application of conventional preconditioners, designed for classical systems to approximate the inverse system matrix efficiently, is not feasible. These operations could introduce ill-conditioned behavior, exacerbating the challenges in quantum computing.  Previous attempts to repurpose preconditioning as a means to improve quantum computational runtimes \cite{CJS13, BNWA23a, SX18} have generally had limited success. A notable exception is the methodology introduced in \cite{TAWL21}, demonstrating an efficient resolution for the Laplace equation under very specific conditions: constant coefficients and periodic boundary conditions. However, this approach's reliance on near-diagonalization of the system of equations by a Fourier transform prevents its application to a wider range of PDEs beyond periodicity structures.

Given the challenges that ill-conditioning poses to quantum solvers, this work aims to demonstrate the feasibility of the finite element method on quantum computers and its advantageous performance. Central to our approach is a re-evaluation of preconditioning strategies suitable for quantum environments, in which the classical BPX multilevel preconditioner \cite{BPX89} is adapted to the quantum context by integrating it with an appropriate stiffness matrix factorization to avoid ill-conditioned operations. Up to logarithmic terms, this adaptation results in a practical algorithm with complexity $\bigO(\tol^{-1})$, which is asymptotically optimal as $\tol$ tends to zero under reasonable assumptions. For fixed dimension $d$, this complexity is proportional to that of its one-dimensional counterpart as $\tol\rightarrow 0$. However, the hidden constant in the $\bigO$ notation may be affected by the dimension $d$. In particular, our algorithm not only achieves the efficiency of \cite{TAWL21} for the periodic Laplacian, but also significantly improves the performance of the aforementioned algorithms by a factor of $\tol^{-2}$ in the regime of $H^2(D)$-regularity and already enables a quantum advantage for $d > 1$. The advantage increases for less regular problems.

In classical computational frameworks, achieving this optimized complexity often depends on strict regularity conditions, such as those applied in the use of high-order finite elements~\cite{MR1695813}. Conversely, the use of low-order finite elements, as discussed here, lends itself to efficient compression by sparse grids or low-rank tensor methods, which exploit the tensor product structure of the problem and its diffusion coefficient, as well as the discretization. We refer to \cite{Bungartz_Griebel_2004} and \cite{Bac23} for the corresponding comprehensive reviews. In fact, tensor methods are conceptually very closely related to quantum computing \cite{Vid03}. Our algorithmic solution utilizes factorizations of the discretized problem that have previously found application in this area \cite{BK20}, thus bridging efficient strategies on classical computers and the emerging quantum computing paradigm.

\subsection*{Notation}
We use the standard bra-ket notation to denote quantum states, which are normalized vectors in $\C^{2^n}$ for some $n \in \N$. For this, the one-qubit states $\ket{0}$ and $\ket{1}$ denote the vectors $(1, 0)^T$ and $(0, 1)^T$, respectively. Similarly, if $0 \le x \le 2^n-1$ is an unsigned $n$-bit integer, then $\ket{x}$ denotes the state of an $n$-qubit register storing said integer. More precisely, if $x$ has base $2$ representation $x_n \dots x_1$ with $x_j \in \{0, 1\}$ for $1 \le j \le n$, then
$$\ket{x} = \ket{x_1} \otimes \dots \otimes \ket{x_n} = (\delta_{jx})_{j=0}^{2^n-1}.$$
Chained kets or bras like $\ket{a}\ket{b}$ indicate the semantic partitioning of a state into registers and refer to the tensor product $\ket{a}\ket{b} = \ket{a} \otimes \ket{b}$.
Finally, for normalized vectors $v \in \C^{2^n}$ we sometimes write $\ket{v}$ instead of $v$. For any vector in ``ket'' notation, the corresponding ``bra'' means the conjugate transposed vector, i.e.\ $\bra{\bullet} = \ket{\bullet}^\dagger$. An optional subscript such as in $\ket{0}_1$ refers to the number of qubits $n$ in the qubit register.

Additionally we use the ``big O'' notation $\bigO(\bullet)$ for asymptotic comparison of algorithms. The hidden constants are independent of the discretization constants but may depend on the dimension $d$ and the diffusion coefficient $A$.
%
%
\section{Model problem}
\label{sec:preliminaries}
The model problem considered in this paper is a prototypical second-order linear partial differential equation \eqref{eq:modelproblem} with a homogeneous Dirichlet boundary condition on a bounded Lipschitz domain $D$ in $d$ spatial dimensions. For the sake of simplicity, we restrict ourselves to the particular choice of the $d$-dimensional unit cube $D := [0, 1]^d$. For the treatment of more general domains, see Remark~\ref{rem:general-domain}.

The solution space of the problem is the Sobolev space $H_0^1(D)$ and the associated bilinear form reads
\begin{equation}\label{eq:bilinear}
    a(u, v) = \int_D A\nabla u\cdot \nabla v\dx
\end{equation}
for $u,v\in H^1_0(D)$. The matrix-valued diffusion coefficient $A$ is a function of the spatial variable $x$ with measurable entries and is assumed to be symmetric positive definite. For simplicity, we assume that there are positive constants $\alpha$ and $\beta$ such that 
\begin{equation}\label{eq:coer}
    \alpha|\eta|^2\leq \eta\cdot A\eta\leq \beta |\eta|^2
\end{equation}
holds for all $\eta\in \R^d$ and almost all $x\in D$, where $|\eta|$ is the Euclidean norm of $\eta$. The condition \eqref{eq:coer} guarantees that the bilinear form \eqref{eq:bilinear} is an inner product on $H_0^1(D)$. It induces the energy norm $\|\cdot\|$, which is
equivalent to the canonical norm on that space. Under the condition \eqref{eq:coer}, the variational boundary value problem
\begin{equation}\label{eq:modelweak}
a(u, v) = f^*(v),\quad v \in H_0^1(D),
\end{equation}
has, according to the Lax-Milgram theorem, for all bounded linear functionals $f^*$ on $H_0^1(D)$ a unique solution $u\in H_0^1(D)$.

The standard way of approximating the model problem \eqref{eq:modelweak} in a wide range of applications is the Galerkin method, that is, restricting the variational problem to some finite-dimensional subspace of $H^1_0(D)$. The most popular choice is the finite element space of continuous piecewise polynomial functions with respect to some suitable subdivision of the domain $D$. For our choice $D = [0, 1]^d$, let $\mathcal{G}_h$ be the Cartesian grid of $D$ of width~$h=2^{-L}$ for some $L\in\N$, i.e.\ the subdivision of $D$ into $2^{dL}$ mutually distinct cubes of width~$h$. We then consider the $Q1$ finite element space $\V_h\subset H_0^1(D)$ of continuous piecewise $d$-linear functions on that grid vanishing on the boundary $\partial D$. The unique solution~$u_h\in \V_h$ of the discrete variational problem
\begin{equation} \label{eq:weak-discrete}
a(u_h, v_h) = f^*(v_h)_{L^2},\quad v_h \in \V_h,
\end{equation}
is the best approximation of $u\in H_0^1(D)$ in the energy norm
\begin{equation}\label{eq:error}
    \|u-u_h\|=\min_{v_h\in\V_h}\|u-v_h\|.
\end{equation}
If the problem data are sufficiently smooth in the sense that the diffusion coefficient $A$ is weakly differentiable with $\nabla A\in L^\infty(D,\R^{d\times d})$ and $f\in L^2(D)$ then $u\in H^2(D)$ and standard interpolation error estimates yield
\begin{equation}\label{eq:errorh}
\|u-u_h\|\leq C_{A,f}h,
\end{equation}
where $C_{A,f}$ only depends on $\alpha$, $\beta$, $\|\nabla A\|_{L^\infty(D,\R^{d\times d})}$ and $\|f\|_{L^2(D))}$.
The characteristic linear rate of convergence with respect to $h$ is optimal under the given assumption of regularity. To ensure an error below the tolerance $\tol > 0$ thus requires the choice~$h \approx \tol$ in general, which implies that the number of degrees of freedom, and hence the computational cost on a classical computer, grows at least as $\tol^{-d}$. Notably, for general diffusion coefficients $A\in L^\infty(D)$, the convergence of the finite element method can become arbitrarily slow \cite{MR1648351},  further amplifying the complexity growth as the dimension~$d$ increases. Only under stronger regularity assumptions can this unfavorable growth with the dimension~$d$ be relaxed. Classical options include the use of higher order finite elements~\cite{MR1695813}. Its regularity requirements and convergence properties are analyzed even for the case of highly oscillatory multiscale coefficients in \cite{MR3022017}. This approach has also been explored for quantum computers~\cite{CLO21}. Alternatively, the above choice of low-order finite elements can be efficiently compressed by sparse grids \cite{Bungartz_Griebel_2004} or low-rank tensor methods \cite{Bac23} based on the tensor product structure of the discretization introduced above for the special case of tensor product diffusion coefficients.

A choice of basis functions $\Lambda_1, \dots, \Lambda_N$ of $\V_h$ transforms the discrete variational problem \eqref{eq:weak-discrete} into a system of $N\coloneqq\dim \V_h=(2^L-1)^d$ linear equations
\[Sc = r\]
for the coefficients $c=(c_1,\ldots,c_N)^T$ of $u_h=\sum_{j=1}^{N}c_j\Lambda_j$ in its basis representation. The system matrix $S \coloneqq (a(\Lambda_j,\Lambda_k))_{j,k=1}^N$ is given by the $a$-inner products of the basis functions, and the right-hand side vector $r \coloneqq ((f, \Lambda_j)_{L^2})_{j=1}^N$ contains the evaluation of the functional~$f^*$ in the basis functions. For the particular choice of hat (or Lagrange) basis functions $\Lambda_j$ associated with the inner vertices $x_j$ of $\mathcal{G}_h$ which are uniquely characterized by the Lagrange property $\Lambda_j(x_k)=\delta_{jk}$ with the Kronecker delta $\delta_{jk}$, the coefficient vector $c$ contains exactly the nodal values~$c_j=u_h(x_j)$. Furthermore, the matrix $S$ is sparse with at most $3^d$ nonzero entries per row and column. In practice, the exact evaluation of the integrals underlying the entries of $S$ and $r$ is not always possible and a suitable quadrature is required. This case of a so-called variational crime \cite{BS08} is well understood in the numerical analysis literature and will not be discussed in this manuscript. In particular, we implicitly assume that the diffusion coefficient $A$ is well approximated by a piecewise constant function on the grid~$\mathcal{G}_h$ without further mention.

\section{Encoding vectors and matrices on quantum computers}\label{sec:encoding}
As a first step towards realizing a quantum implementation of the finite element method, we will address the necessary processing of vectors and matrices on a quantum computer. This so-called encoding is critical to realizing the potential speed-up that quantum computing offers over traditional classical computing methods. For readers who are not familiar with the underlying basic principles of quantum computing, including qubits and quantum gates, we recommend the introductory article \cite{doi:10.1137/18M1170650}. For a more comprehensive mathematical overview, we refer to \cite{Lin22}.
Here, we give some background on quantum computing concepts in the context of linear algebra. Although these are mostly well-established results, we improve the precision of the notation of encodings where the dimension is not a power of $2$ and introduce the concept of \textit{subnormalization}, which is central to the findings of the following sections.

In quantum computing, a vector $c \in \C^{2^m}$ for $m\in\N$ is represented by a register of $m$ qubits. Using the aforementioned Bra-ket notation, the canonical basis vectors~$e_j \in \C^{2^m}$, where $(e_j)_k = \delta_{jk}$ for $j,k = 0, \ldots, 2^m - 1$, are denoted as $\ket{j}$. In practice, this representation corresponds to a quantum state in which the qubits in a register encode the integer~$j$ in binary, assigning one digit per qubit.
The linear combination of basis vectors, $c_0 e_0 + \dots + c_{2^m-1} e_{2^m-1}$, is realized as a superposition $\sum_j c_j \ket{j}$ of basis states. It is imperative that these states are normalized to ensure admissibility on a quantum computer, which is satisfied when the Euclidean norm of $c$ satisfies $|c| = 1$. Under this normalization condition, the resulting state is represented by $\ket{c}$.

A matrix $S$, on the other hand, is encoded not as a data structure stored in quantum memory, but as an operation that transforms a basis state $\ket{i}$ into a superposition corresponding to the matrix-vector product of $S$ and the basis vector represented by $\ket{i}$ (the $i$-th column of $S$ if indexing starts at zero).  This method relies on the inherently linear nature of quantum operations, ensuring that linear combinations of basis states are correctly mapped according to the action of the matrix. However, quantum transformations must also be unitary, i.e.\ norm-preserving. Encoding arbitrary linear transformations thus requires embedding them in unitary transformations acting on higher-dimensional spaces. 
Specifically, we use projection operators to precisely embed the matrix $S$ in a larger unitary operation $U_S$. These projections are encoded as generalized controlled NOT gates. We will use $U_S$ to refer to both the unitary matrix and its implementation interchangeably.
\begin{definition}[Projection as $\CNOT{\Pi}$ gate]
    Let $N, m \in \N$ with $N \le 2^m$ and $\Pi : \C^{2^m} \to \C^N$ be a projection, that is, a matrix with orthonormal rows. We call a gate of $m + 1$ qubits a $\CNOT{\Pi}$ gate if it flips the last bit if and only if the first $m$ bits represent a vector in the range of $\Pi$, i.e.~it performs the unitary operation 
\[ \Pi^\dagger\Pi \otimes (\op{0}{1} + \op{1}{0}) + (1 - \Pi^\dagger\Pi) \otimes \Id.\]
\end{definition}
The subsequent definition of block encodings of matrices uses such projection gates onto both the domain and the range of the matrices.

\begin{definition}[Block encoding of a matrix, normalization, and subnormalization]\label{d:block}
Let $N_1, N_2, m \in \N$ with $N_1,N_2 \le M \coloneqq 2^m$. Let $S \in \C^{N_2 \times N_1}$ and $\gamma \ge |S|$. Further, let $\Pi_1 \colon \C^M \to \C^{N_1}$ and $\Pi_2 \colon \C^M \to \C^{N_2}$ be projections. Consider a quantum algorithm implementing a unitary matrix $U_S \in \C^{M \times M}$ and implementations of a $\CNOT{\Pi_1}$ and a $\CNOT{\Pi_2}$ gate. We call the tuple $(\gamma, U_S, \CNOT{\Pi_1}, \CNOT{\Pi_2})$ a \textit{block-encoding} of $S$ if
\[ \gamma\Pi_2 U_S \Pi_1^\dagger = S. \]
For simplicity we will refer to the whole block encoding by $U_S$.
The scalar $\gamma(U_S) \coloneqq \gamma$ is called the \textit{normalization factor} and $\tilde \gamma(U_S) \coloneqq \gamma(U_S) / |S| \ge 1$ is called the \textit{subnormalization factor}, where $|\bullet|$ denotes the spectral norm.
\end{definition}

The above definition of block encoding differs slightly from the one in \cite{GSLW19,Lin22}, where the number of ancilla bits and an approximation error of the encoding are used as additional parameters, while the subnormalization factor is not introduced. In the present context, subnormalization is an important quantity that measures the extent to which the norm of a vector is preserved under the action of block encoding, with $\tilde \gamma(U_S) = 1$ being optimal. We will see later that the computational complexity of solving a linear system is proportional to $\tilde \gamma(U_S) \cdot \kappa(S)$ and not just to $\kappa(S)$. The normalization factor $\gamma(U_S)$, on the other hand, tracks the relationship between the encoding and the actual matrix $S$. Knowing this quantity is essential to correctly interpret the result of applying the encoding to a vector.

Unitary matrices are trivially encoded by a circuit that implements them. In this case, both $\Pi_1$ and $\Pi_2$ are identity matrices. However, such trivial encoding is generally efficient only for ``small'' unitary matrices, that is, those whose size does not depend on the size of the problem. To construct block encodings of larger and more complicated matrices, we will make use of the following calculus of block encodings. Therein, bounds for the normalization and subnormalization factors are quantified in terms of the condition number $$\kappa(A):=|A| \cdot |A^+| = \sigma_{\max}(A)/\sigma_{\min}(A)$$
of a matrix $A$ with maximal and minimal singular values $\sigma_{\max}(A)$ and $\sigma_{\min}(A)$. Here, $\bullet^+$ is the Moore-Penrose pseudoinverse, so $\kappa(A)$ is well defined even for rectangular matrices. Note that $\kappa(A)$ may differ from the ratio of the largest and smallest eigenvalue~$\lambda_{\max}(A)/\lambda_{\min}(A)$ in the case of square but non-symmetric matrices.
\begin{proposition}[Operations on block encoded matrices] \label{prop:block-encoding-ops}
Let $A \in \C^{m_A \times n_A}, B \in \C^{m_B \times n_B}$ be matrices with respective block encodings $U_A$ and $U_B$. Then block encodings with normalization $\gamma$ and subnormalization $\tilde \gamma$ can be constructed for the following operations:
\begin{subequations}
\begin{align}
A \otimes B&\quad\text{ with }\gamma = \gamma(U_A) \gamma(U_B)\text{ and }\tilde \gamma = \tilde \gamma(U_A)  \tilde \gamma(U_B), \label{eq:op-tensor} \\
A^\dagger &\quad\text{ with }\gamma = \gamma(U_A)\text{ and }\tilde \gamma = \tilde \gamma(U_A) \label{eq:op-transpose} \\
\begin{bmatrix}A & 0 \\ 0 & B \end{bmatrix}&\quad\text{ with }\gamma = \max\{\gamma(U_A), \gamma(U_B)\} \text{ and } \tilde \gamma \le \max\{\tilde \gamma(U_A), \tilde \gamma(U_B)\}, \label{eq:op-block-diagonal}\\[-2ex]
AB &\quad\text{ with }\gamma = \gamma(U_A)  \gamma(U_B) \text{ and }\tilde \gamma \le\begin{cases} \kappa(A)  \tilde \gamma(U_A)  \tilde \gamma(U_B) \text{ if } m_A \ge n_A,\\
\kappa(B)  \tilde \gamma(U_A)  \tilde \gamma(U_B) \text{ if } n_B \ge m_B,\end{cases}\label{eq:op-mutiply}\\
\begin{bmatrix}A & B \end{bmatrix}&\quad\text{ with } \gamma = \sqrt{\gamma(U_A)^2 + \gamma(U_B)^2} \text{ and }\tilde \gamma \le \sqrt{\tilde \gamma(U_A)^2 + \tilde \gamma(U_B)^2}, \label{eq:op-block-horizontal}\\
\mu_A A + \mu_B B &\quad \text{ for any }\mu_A, \mu_B \in \C \text{ with }\gamma = |\mu_A\gamma(U_A)| + |\mu_B\gamma(U_B)|,\label{eq:op-add}
\end{align}
\end{subequations}
provided that the dimensions are compatible in the sense of $\Pi_{B,2} = \Pi_{A,1}$ in \eqref{eq:op-mutiply}, $\Pi_{A,2} = \Pi_{B,2}$ in \eqref{eq:op-block-horizontal} and \eqref{eq:op-add}, and $\Pi_{A,1} = \Pi_{B,1}$  in \eqref{eq:op-add}. Without further assumptions, the subnormalization in \eqref{eq:op-add} can be unbounded.
\end{proposition}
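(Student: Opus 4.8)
The plan is to construct each of the six claimed block encodings explicitly, as a composition of the supplied unitaries $U_A,U_B$ and projection gates $\CNOT{\Pi_{A,i}},\CNOT{\Pi_{B,i}}$ together with $\bigO(1)$ additional ancilla qubits and gates, and then to verify the defining identity $\gamma\,\Pi_2 U_S\Pi_1^\dagger=S$ of Definition~\ref{d:block} by a direct computation; the stated normalization factor drops out of that computation, and the subnormalization bound then follows from $\tilde\gamma(U_S)=\gamma(U_S)/|S|$ together with elementary spectral-norm estimates for the composite matrix $S$. Write $\gamma_A=\gamma(U_A)$, $\tilde\gamma_A=\tilde\gamma(U_A)$ and similarly for $B$. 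Three preliminary reductions streamline the argument. First, a block encoding on $m$ qubits can be \emph{padded} to $m'>m$ qubits by replacing $U_S$ with $U_S\oplus\Id$ and keeping $\Pi_1,\Pi_2$ (hence the $\CNOT{\Pi_i}$ gates) unchanged, which alters neither $\gamma$ nor $\tilde\gamma$. Second, the normalization factor can be \emph{raised} from $\gamma$ to any $\gamma'\ge\gamma$ at the price of one ancilla qubit carrying a single-qubit rotation that realizes multiplication by the scalar $\gamma/\gamma'\le1$. Third, composite projection gates such as $\CNOT{\Pi_{A,1}\otimes\Pi_{B,1}}$ and $\CNOT{\operatorname{diag}(\Pi_{A,1},\Pi_{B,1})}$ are assembled from $\CNOT{\Pi_{A,1}}$ and $\CNOT{\Pi_{B,1}}$ with one clean ancilla (for the logical conjunction) or an extra control qubit (for the block-diagonal case), uncomputed to $\ket{0}$ at the end.

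With these in place, four of the six operations are short. For $A\otimes B$ in~\eqref{eq:op-tensor} I would take $U_{A\otimes B}=U_A\otimes U_B$ and $\Pi_i=\Pi_{A,i}\otimes\Pi_{B,i}$, so that $\gamma_A\gamma_B\,\Pi_2U_{A\otimes B}\Pi_1^\dagger=(\gamma_A\Pi_{A,2}U_A\Pi_{A,1}^\dagger)\otimes(\gamma_B\Pi_{B,2}U_B\Pi_{B,1}^\dagger)=A\otimes B$, and $|A\otimes B|=|A|\,|B|$ yields $\tilde\gamma=\tilde\gamma_A\tilde\gamma_B$. For $A^\dagger$ in~\eqref{eq:op-transpose} I would take $U_{A^\dagger}=U_A^\dagger$ and swap the two projection gates; daggering the identity for $U_A$ and $|A^\dagger|=|A|$ give the claim. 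For the block diagonal in~\eqref{eq:op-block-diagonal}, the first two reductions arrange $m_A=m_B$ and $\gamma_A=\gamma_B=:\gamma=\max\{\gamma_A,\gamma_B\}$; then one adds a select qubit, sets $U_S=\op{0}{0}\otimes U_A+\op{1}{1}\otimes U_B$ and $\Pi_i=\operatorname{diag}(\Pi_{A,i},\Pi_{B,i})$, and uses $|\operatorname{diag}(A,B)|=\max\{|A|,|B|\}$, which gives $\tilde\gamma=\gamma/|S|\le\max\{\tilde\gamma_A,\tilde\gamma_B\}$, the quoted value being this bound.

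The horizontal concatenation~\eqref{eq:op-block-horizontal} and the linear combination~\eqref{eq:op-add} are both linear-combination-of-unitaries constructions. In the first, the input index qubit that distinguishes the two column blocks doubles as the select qubit: apply $U_A$ and $U_B$ controlled on it, apply a $2\times2$ rotation whose first row is $(\gamma_A,\gamma_B)/\sqrt{\gamma_A^2+\gamma_B^2}$, and demand the qubit be $\ket{0}$; linearity and a short amplitude count give $\gamma=\sqrt{\gamma_A^2+\gamma_B^2}$, with $\Pi_2$ inherited from the common range projection $\Pi_{A,2}=\Pi_{B,2}$, which explains the need for that compatibility. The subnormalization bound reduces to $|[A\ B]|^2=|AA^\dagger+BB^\dagger|\ge\max\{|A|^2,|B|^2\}$, whose right-hand side dominates the weighted harmonic mean $(\gamma_A^2+\gamma_B^2)/(\gamma_A^2|A|^{-2}+\gamma_B^2|B|^{-2})$, i.e.\ $\tilde\gamma\le\sqrt{\tilde\gamma_A^2+\tilde\gamma_B^2}$. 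The linear combination~\eqref{eq:op-add} is analogous but with an \emph{auxiliary} select register (prepare amplitudes proportional to $\sqrt{|\mu_A|\gamma_A}$ and $\sqrt{|\mu_B|\gamma_B}$, push the phases of $\mu_A,\mu_B$ into $U_A,U_B$, select, unprepare), which forces $\Pi_{A,1}=\Pi_{B,1}$ and $\Pi_{A,2}=\Pi_{B,2}$ and produces the stated $\gamma$; the subnormalization $\gamma/|\mu_AA+\mu_BB|$ is genuinely unbounded, since $\mu_AA+\mu_BB$ may vanish (for instance $B=-(\mu_A/\mu_B)A$), as asserted.

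The product~\eqref{eq:op-mutiply} is the crux and the step I expect to be hardest. Composing $U_A$ after $U_B$ naively is incorrect, because the compatibility $\Pi_{B,2}=\Pi_{A,1}$ only identifies the relevant subspace $\operatorname{range}(\Pi_{A,1}^\dagger)$, whereas $U_B\Pi_{B,1}^\dagger\ket{x}$ generically also has a component orthogonal to it that must be discarded. I would insert, between $U_B$ and $U_A$, a $\CNOT{\Pi_{A,1}}$ acting on one fresh ``interface'' qubit pinned to $\ket{0}$ in both $\Pi_1$ and $\Pi_2$; tracking amplitudes then yields $\Pi_2U_{AB}\Pi_1^\dagger=(A/\gamma_A)(B/\gamma_B)$, hence $\gamma=\gamma_A\gamma_B$ (a valid choice since $\gamma_A\gamma_B\ge|A|\,|B|\ge|AB|$). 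The remaining obstacle is the subnormalization: writing $\tilde\gamma=\gamma_A\gamma_B/|AB|=\tilde\gamma_A\tilde\gamma_B\cdot|A|\,|B|/|AB|$, the bound $\tilde\gamma\le\kappa(A)\tilde\gamma_A\tilde\gamma_B$ is equivalent to $|AB|\ge|B|/|A^+|=\sigma_{\min}(A)\,|B|$, which holds whenever $A$ is injective on $\operatorname{range}(B)$ --- in particular when $\ker A=\{0\}$ --- while $\tilde\gamma\le\kappa(B)\tilde\gamma_A\tilde\gamma_B$ reduces to $|AB|\ge|A|/|B^+|$ and follows by applying the previous inequality to $(AB)^\dagger=B^\dagger A^\dagger$; matching these injectivity conditions to the rank hypotheses in the statement is part of the bookkeeping. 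Throughout, I would still have to check that the interface qubit, the clean ancillas of the third reduction, and the select registers are all disentangled and reset to $\ket{0}$, so that the output is an \emph{exact} block encoding in the sense of Definition~\ref{d:block} and not merely an approximate one; this bookkeeping, together with the spectral-norm case analysis for the product, is where essentially all of the work concentrates.
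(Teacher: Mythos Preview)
Your proposal is correct and matches the paper's proof in Appendix~\ref{app:proof} essentially step for step: the same $U_A\otimes U_B$, $U_A^\dagger$, select-controlled, and interface-qubit constructions for \eqref{eq:op-tensor}--\eqref{eq:op-mutiply}, while your direct LCU builds for \eqref{eq:op-block-horizontal} and \eqref{eq:op-add} are exactly what the paper obtains by factoring those two cases as a product of a $1\times 2$ row (resp.\ row and column) vector with the block-diagonal encoding from \eqref{eq:op-block-diagonal}. Your deferred ``bookkeeping'' on the product bound is the same argument the paper gives (namely $A=ABB^+$ when $BB^+=\Id$, hence $|A|\le|AB|\,|B^+|$, and its transpose), so there is nothing further to invent there.
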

The results for multiplication and linear combination are due to \cite{GSLW19}. A complete proof of the proposition along with concrete quantum circuits is given in the Appendix~\ref{app:proof}.
Note that in all cases, the subnormalization factor of the resulting block encoding can be calculated from Definition \ref{d:block}, as long as the spectral norm of the encoded matrix is known. This is especially useful for the addition of matrices, where an estimate of the spectral norm of the result requires further assumptions on the summands.

For efficient treatment of the diffusion coefficient, we will later take advantage of the fact that block diagonal matrices can be efficiently implemented for many cases.
\begin{proposition}[Block encoding of block diagonal matrices] \label{prop:block-encoding-diag}
Let $n, m, N, M \in \N$ with $N = 2^n$ and $M \le 2^m$. Let $\gamma > 0$ and $A \in NM\times NM$ be a $N \times N$ block diagonal matrix with blocks of size $M \times M$. Let $\Pi \colon \C^{2^m} \to \C^M$ be a projection with an implementation of $\CNOT{\Pi}$. Let $U_A$ be a quantum circuit operating on two qubit registers $\ket{j}_n, \ket{x}_m$ applying a block encoding of the $j$-th diagonal block $A_{jj}$ with projection $\Pi$ and normalization $\gamma$ on the register $\ket{x}$, that is, \[ \gamma(\bra{j} \otimes \Pi) U_A (\ket{j} \otimes \Pi) = A_{jj}.\]
Then we can construct a block encoding of $A$ with normalization $\gamma$. Its subnormalization is the minimum of the subnormalization over all blocks.
\end{proposition}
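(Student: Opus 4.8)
The plan is to use the given circuit $U_A$ itself as the unitary of the block encoding of $A$ and to synthesize the two controlled-NOT gates from the single provided gate $\CNOT{\Pi}$. Identify the ambient Hilbert space as $\C^{2^{n+m}} = \C^{N} \otimes \C^{2^m}$, where the first tensor factor carries the block index $j$ and the second the within-block index; since $N = 2^n$ and $M \le 2^m$ we have $NM \le 2^{n+m}$, and $\C^{NM} = \C^N \otimes \C^M$ sits inside the ambient space via the projection $\widehat\Pi := \Id_{\C^N} \otimes \Pi$. I take $\Pi_1 = \Pi_2 := \widehat\Pi$. Since $\widehat\Pi^\dagger \widehat\Pi = \Id_{\C^N} \otimes \Pi^\dagger \Pi$, a $\CNOT{\widehat\Pi}$ gate has to flip its target exactly when the $\ket x$ register lies in the range of $\Pi$, regardless of the state of the $\ket j$ register; hence it is implemented by applying the given $\CNOT{\Pi}$ to the $\ket x$ register and the target qubit while leaving $\ket j$ untouched.

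It then remains to verify the block-encoding identity $\gamma\, \widehat\Pi\, U_A\, \widehat\Pi^\dagger = A$ with normalization factor $\gamma$. The hypothesis that $U_A$ applies on the register $\ket x$ a block encoding of the $j$-th block, controlled by the value $j$ stored in the register $\ket j$, means precisely that $U_A$ does not couple distinct values of $j$, i.e.\ it has the controlled form $U_A = \sum_{j=0}^{N-1} \op{j}{j} \otimes V_j$ with each $V_j$ a unitary on $\C^{2^m}$ satisfying $\gamma\, \Pi V_j \Pi^\dagger = A_{jj}$ (in particular $\gamma \ge |A_{jj}|$ for every $j$, hence $\gamma \ge \max_j |A_{jj}| = |A|$ and $\gamma$ is an admissible normalization for $A$). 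Then
\[ \gamma\, \widehat\Pi\, U_A\, \widehat\Pi^\dagger = \gamma \sum_{j} \op{j}{j} \otimes \Pi V_j \Pi^\dagger = \sum_j \op{j}{j} \otimes A_{jj}, \]
and the right-hand side is by definition the $N\times N$ block diagonal matrix with diagonal blocks $A_{jj}$, that is $A$. Hence $(U_A, \CNOT{\widehat\Pi}, \CNOT{\widehat\Pi})$ is a block encoding of $A$ with normalization $\gamma$.

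For the subnormalization I would invoke Definition~\ref{d:block}: $\tilde\gamma(U_A) = \gamma / |A|$. Because $A$ is block diagonal, its singular values are the union of those of the blocks $A_{jj}$, so $|A| = \max_j |A_{jj}|$; each given block encoding of $A_{jj}$ has subnormalization $\gamma/|A_{jj}|$, whence $\tilde\gamma(U_A) = \gamma / \max_j |A_{jj}| = \min_j \gamma/|A_{jj}|$ is exactly the minimum of the block subnormalizations, as claimed.

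The one genuinely delicate point — and the step I would state most carefully — is the passage from ``$U_A$ applies a block encoding on the $\ket x$ register, controlled by $\ket j$'' to the diagonal form $U_A = \sum_j \op{j}{j} \otimes V_j$: this controlled structure must be read as part of the hypothesis (the circuit is conditioned on the first register), because the bare identities $\gamma(\bra j \otimes \Pi)\, U_A\, (\ket j \otimes \Pi^\dagger) = A_{jj}$ on their own would still permit $j$-off-diagonal blocks of $U_A$ that are not annihilated by the $\widehat\Pi$-compression and would therefore contaminate $\widehat\Pi\, U_A\, \widehat\Pi^\dagger$. Everything else is routine bookkeeping: matching the index ordering between $\C^{NM}$ and $\C^N \otimes \C^{2^m}$, and the inequality $NM \le 2^{n+m}$ that makes $\widehat\Pi$ well-defined.
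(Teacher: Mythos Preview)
Your proof is correct and follows exactly the paper's approach: the paper's entire argument is the one-line observation that $U_A$ is a block encoding of $A$ with projection $\Id_N\otimes\Pi$, together with $|A|=\max_j|A_{jj}|$. You have simply filled in the details the paper omits (the construction of $\CNOT{\widehat\Pi}$ and the explicit verification of $\gamma\,\widehat\Pi\,U_A\,\widehat\Pi^\dagger=A$), and your flag about needing the controlled form $U_A=\sum_j\op{j}{j}\otimes V_j$ is a genuine subtlety that the paper's proof silently assumes via the phrasing ``applying a block encoding \ldots\ on the register $\ket{x}$''.
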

\begin{proof}
It is easy to see that $U_A$ is a block encoding of $A$ with projection $\Id_N \otimes \Pi$. To estimate the subnormalization simply observe that the spectral norm $|A|$ is the maximum of the spectral norms of all blocks.
\end{proof}
Block encodings can also be efficiently implemented directly for other classes of matrices, such as sparse matrices or matrices that can be written as a linear combination of unitaries (LCU) \cite{GSLW19}. However, it is clear that one cannot encode $k$ bits of information without using $\bigO(k)$ operations, i.e., we cannot expect to encode arbitrary matrices of size $N$ in fewer than $\bigO(N^2)$ operations. This applies analogously to vectors. Since we want to achieve runtimes faster than linear in $N$, this implies that the data must be available in some compressed way. In the context of finite elements, this means that we expect the diffusion coefficient entering the stiffness matrix $S$ and the right-hand side $\ket{r}$ to be given by subroutines that compute specific integrals of either function. For details see the Appendix \ref{app:implementation-vectors}. Note that the Cartesian meshes have a compressed representation through their mesh sizes, which determine the coordinates of all $\bigO(N)$ nodes and elements. If one wished to use more general meshes, it would still be necessary to describe them using information that scales at most polylogarithmically with $N$ to fully exploit the quantum potential.

\section{The Quantum Linear System Problem}
As outlined in Section \ref{sec:preliminaries} we want to use quantum computers to solve linear systems of the form
\[ Sc = r\]
arising from a particular finite element discretization of the partial differential equation~\eqref{eq:modelproblem}. Without loss of generality it is assumed that $|S| = 1$ and $|r| = 1$ in this section. In practice, this assumption is easily removed by an appropriate rescaling of the equation.

In the quantum context we assume that the right-hand side is given as an oracle. The oracle should prepare the corresponding superposition $\ket{r}$, that is, it maps the initial state~$\ket{0}$ to the state $\ket{r}$. This is explained in detail in Appendix~\ref{app:implementation-vectors}. In addition, we assume that the matrix $S$ is given as a block encoding, although there are other input models, such as oracles which compute the sparsity structure and the corresponding entries of the matrix. 
Since $|S^{-1}| = \kappa(S)$ might be larger than $1$ it is not possible in general to find a state $\ket{c}$ such that
\[ \ket{r} = (\bra{0} \otimes \Id)U_S \ket{0}\ket{c} = S\ket{c}.\]
Such a $\ket{c}$ may exceed $1$ in norm and thus may not be representable on a quantum computer. Instead, quantum solvers compute the normalized solution up to a tolerance as specified in the following definition.
\begin{definition}[Quantum linear system algorithm] \label{d:qlsa}
Consider an algorithm that takes as input a block encoding $U_S$ of a matrix $S$ and a unitary $U_r$ which prepares some state $\ket{r}$, meaning $U\ket{0} = \ket{r}$. We say that the algorithm solves the quantum linear system problem for \textit{effective condition} $\keff > 0$ and tolerance $\tol > 0$, if given $U_S$ with $\keff \ge \tilde \gamma(U_S)\kappa(S)$ it produces a state $\ket{c}$ approximating the normalized solution to $Sc = r$ such that 
\[ \bigg|\ket{c} - \frac{S^{-1}\ket{r}}{|S^{-1}\ket{r}}\bigg| < \tol. \]
The number of calls to each of $U_S$ and $U_r$ in the algorithm is denoted by the algorithm's \textit{ query complexity}. We also allow the algorithm to fail with a probability that is independent of $\kappa$ and $\tol$, in which case it should communicate the failure via an additional flag bit.
\end{definition}
There are several algorithms for solving linear algebraic systems on quantum computers. The first of these algorithms \cite{HHL09} constructs a state that encodes $\keff^{-1} S^{-1} \ket{c}$ and normalizes it by amplitude amplification, achieving a complexity of $\bigO(\keff^2\tol^{-1})$. Recently, it has been shown that it is possible to improve the dependence of the complexity bound on both $\keff$ and $\tol$, either by more sophisticated techniques such as variable-time amplitude amplification \cite{Amb10, BCOW17, CGJ19}, or by constructing the normalized state $\ket{c}$ directly, thus avoiding the amplification process \cite{AL22, LT20}. For a detailed comparison of these approaches, see \cite[Appendix A]{TAWL21}. Many of these algorithms achieve the runtime quantified in the following theorem and thus serve as a possible proof.
\begin{theorem}[Existence of fast quantum linear system algorithms] \label{thm:qlsa}
A solution of the quantum linear system problem in the sense of Definition~\ref{d:qlsa} can be computed with a query complexity  of $\bigO(\keff \poly \log(\keff\, \tol^{-1}))$.
\end{theorem}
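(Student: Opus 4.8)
The statement is an existence result, so the plan is not to construct a new solver but to \emph{reduce} the quantum linear system problem as phrased in Definition~\ref{d:qlsa} to the setting handled by the near-optimal algorithms of \cite{LT20, AL22} (or, alternatively, the variable-time amplitude amplification of \cite{Amb10, BCOW17}) and then to quote their complexity bounds. The only genuine content is reconciling the block-encoding conventions used here --- the $\CNOT{\Pi}$/projection formalism together with the subnormalization factor $\tilde\gamma$ --- with the conventions under which those bounds are stated in \cite{GSLW19, Lin22}.

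First I would convert the triple $(O_S, \CNOT{\Pi_1}, \CNOT{\Pi_2})$ into a plain block encoding of the type \eqref{eq:block-encoding-example}. By the remark following Definition~\ref{d:block}, $U_{\tilde S} \coloneqq \CNOT{\Pi_2}(O_S \otimes \Id)\CNOT{\Pi_1}$ is such a block encoding of $S$ with normalization $\gamma(O_S)$ (up to a kernel and cokernel orthogonal to $\C^{N_1}, \C^{N_2} \subset \C^M$), and each call to it costs $\bigO(1)$ queries to $O_S$. The matrix it encodes with subnormalization one is $A \coloneqq \tfrac{1}{\gamma(O_S)}S$, identified with its trivial embedding; since $|S| = 1$ we have $\gamma(O_S) = \tilde\gamma(O_S)$, whence $|A| = 1/\tilde\gamma(O_S) \le 1$ and $|A^+| = \tilde\gamma(O_S)\,|S^+| = \tilde\gamma(O_S)\,\kappa(S) = \kappa$. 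The extra kernel and cokernel are harmless, because $O_r$ prepares $\ket{r}$ inside $\operatorname{range}\Pi_2^\dagger$ and the solvers invoked below accept exactly this projected input model, so all of their intermediate states remain in the relevant subspaces. Finally $A^+\ket{r}$ is a positive scalar multiple of $S^{-1}\ket{r}$, so the normalized target is unchanged: $S^{-1}\ket{r}/|S^{-1}\ket{r}| = A^+\ket{r}/|A^+\ket{r}|$.

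Next I would feed $U_{\tilde S}$ together with $O_r$ into a near-optimal quantum linear system algorithm. Such an algorithm, given a block encoding of a matrix $A$ with $|A| \le 1$ and an a priori bound $|A^+| \le \kappa$, together with a preparation unitary for $\ket{r}$, outputs a state within distance $\epsilon$ of $A^+\ket{r}/|A^+\ket{r}|$ using $\bigO(\kappa\,\poly\log(\kappa/\epsilon))$ queries, with a constant failure probability announced through a flag qubit --- precisely the behaviour permitted by Definition~\ref{d:qlsa}. Since the bound $|A^+| \le \kappa$ supplied above holds with equality, it is tight; taking $\epsilon = \tol$ and recalling $\kappa = \tilde\gamma(O_S)\kappa(S)$ yields the asserted query complexity $\bigO(\kappa\,\poly\log(\kappa\,\tol^{-1}))$.

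The step I expect to be the crux is the bookkeeping of the subnormalization: one must verify that the parameter governing the cost of the cited solvers is $|A^+| = \tilde\gamma(O_S)\kappa(S)$ rather than merely $\kappa(S)$ --- i.e. that rescaling the encoded matrix down by $\tilde\gamma(O_S)$ raises the effective condition number by the same factor --- and that the $\CNOT{\Pi}$ input model of Definition~\ref{d:block} is interchangeable with the ancilla-register block-encoding model of \cite{GSLW19, Lin22} for which the $\bigO(\kappa\,\poly\log(\kappa/\epsilon))$ bounds are proved. The remaining ingredients --- reducing the projected encoding to \eqref{eq:block-encoding-example}, the invariance of the normalized solution under positive rescaling, and propagating the failure flag --- are routine.
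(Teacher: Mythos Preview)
Your proposal is correct and matches the paper's approach: the paper does not give an independent argument but simply cites \cite{AL22, LT20} (and alternatively \cite{Amb10, BCOW17}) as establishing the stated complexity. Your write-up is in fact more careful than the paper's, since you spell out the reduction from the $\CNOT{\Pi}$/subnormalization conventions of Definition~\ref{d:block} to the standard block-encoding input model and verify that the effective condition number is $\tilde\gamma(O_S)\kappa(S)$, whereas the paper leaves this bookkeeping implicit.
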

Similarly, one can obtain an estimate of the norm of the solution $|S^{-1}\ket{r}|$ with complexity $\bigO(\tol^{-1}\keff\poly\log(\keff\,\tol^{-1}))$ \cite{CGJ19}. There is an extra factor of $\tol^{-1}$ in this runtime compared to preparing $\ket{c}$, but the norm estimate has to be computed only once, while one typically has to prepare the state $\ket{c}$ multiple times.

Specifically, once the solution $\ket{c}$ is computed, the reliable measurement of all its components requires at least $\bigO(N/\tol)$ samples of the state $\ket{c}$ and thus $\bigO(N/\tol)$ runs of the solver algorithm, eliminating any advantage over classical computers. A quantum advantage can be expected when computing partial information from $\ket{c}$ only. Typically, the goal is to measure some \textit{quantity of interest} of $\ket{c}$, which is computed either as a product $\melem{c}{M}{c}$ with a matrix $M$, for example using the \textit{Hadamard test} \cite{AJL06} in the case where $M$ is unitary, or as a linear functional 
$$c \mapsto \ip{m}{c}$$
represented by a vector $m \in \C^N$ using the \textit{Swap Test} \cite{BCWD01}. This requires an efficient construction of the state $\ket{m}$ or the unitary transformation $M$. Note that even to measure a quantity of interest up to the tolerance $\tol$, any solver algorithm must be run at least $\bigO(\tol^{-1})$ times, resulting in a total runtime of $\bigO(\keff\tol^{-1}\poly\log(\keff\tol^{-1}))$. This is not a specific phenomenon for linear systems. Estimating any quantity encoded in the amplitudes of a quantum state, a process called \textit{amplitude estimation}, requires at least $\bigO(\tol^{-1})$ measurements \cite{Md23}, and thus it is reasonable to assume that computing any information about the solution to \eqref{eq:modelproblem} using a quantum computer must be of at least linear complexity in $\tol^{-1}$. In the context of the finite element method, this is consistent with the results of \cite{MP16}, where it is shown that the runtime of the quantum algorithms to solve \eqref{eq:modelproblem} must be at least polynomial in $\tol^{-1}$. However, the optimal dependence on $\tol^{-1}$ is not achieved by trivially applying a quantum linear system algorithm to the linear system $Sc = r$. The condition number $\kappa(S) \in \bigO(h^{-2})$ introduces a multiplicative factor, which in the case of $H^2(D)$-regularity translates into $\tol^{-2}$, because $h\lesssim \tol$ is required to ensure a sufficiently small discretization error according to \eqref{eq:errorh}. We instead use \textit{preconditioning} to avoid suboptimal scaling of the algorithms caused by the increasing condition number of the stiffness matrix for smaller tolerances.
%
%
\section{Quantum Preconditioning}
\label{sec:quantum-preconditioning}
Starting with a discussion of its challenges and limitations, this section focuses on preconditioning as an essential strategy to improve the performance of quantum computing methods. The concept of preconditioning, well known in classical computing for its effectiveness in improving the efficiency of iterative solvers, faces unique challenges and opportunities when applied to quantum computing. Given the fundamental differences of the quantum paradigm, in particular the requirement for unitary operations, traditional preconditioning techniques must be reevaluated and adapted. This adaptation is critical to address the conditioning issues that significantly impact the feasibility and efficiency of quantum algorithms to solve PDEs.

\subsection{Challenges introduced by unitary transformation}
Preconditioning is a technique that transforms a problem into a more favorable form for numerical solution, typically by left- and right- preconditioning of the form $P_1SP_2y = P_1r$. We will focus on the case of \textit{operator preconditioning} \cite{Hiptmair}, where $P_1$ and $P_2$ represent discretizations of linear operators, denoted as $\mathcal{R}_1$ and $\mathcal{R}_2$. The classical application involves assembling the matrices $P_1$, $S$, and $P_2$, along with the vector $r$, to compute the right-hand side $P_1r$. Iterative solvers then utilize this vector and the matrices for solving the preconditioned linear equation, ultimately leading to the solution of the original problem via $c = P_2y$.

Initially, one might consider the direct implementation of classical preconditioning techniques in a quantum context. This would involve constructing a quantum state corresponding to the right-hand vector $\ket{r}$, finding block encodings for the matrices $P_1$, $S$, and $P_2$, and then computing the quantum equivalent of the matrix-vector product $\ket{P_2r}$. With the matrix product $P_1SP_2$ at hand, obtained through quantum-specific matrix-matrix multiplication  \eqref{eq:op-mutiply} from Proposition \ref{prop:block-encoding-ops}, one could then use a quantum linear system solver to generate a state $\ket{y}$. Ideally, this state $\ket{y}$, after further manipulation using the block encoding of $P_2$, would yield the solution to the original problem. However, this approach has a fundamental limitation inherent in quantum computing: the spectral norm of any matrix implemented on a quantum computer must be bounded from above by $1$. This limitation implies that multiplication by a matrix with a condition number $\kappa$ could, in the worst case, decrease the norm of the resulting vector by a factor of $\kappa$. Although the loss of scale can be somewhat reversed by a process called \textit{amplitude amplification} \cite{BH97, Gro98, Md23}, this is not possible without repeating all the steps leading to the obtained state at least $\bigO(\kappa)$ times. This requirement negates any potential efficiency gains, as the total runtime remains linearly dependent on $\kappa$.

Moreover, effective preconditioning, by definition, seeks to adjust the condition number of a problem. For preconditioners $P_1$ and $P_2$ to be effective, their total condition~$\kappa(P_1)\kappa(P_2)$ must be close to $\kappa(S)$. Thus, attempting to apply the typical matrix multiplication strategy~\eqref{eq:op-mutiply} to perform the sequential multiplication of the preconditioners and the matrix on a quantum state results in significant subnormalization of the quantum state, proportional to~$\kappa(S)$. 
This directly undermines the feasibility of straightforwardly applying classical preconditioning techniques in a quantum context, marking a key problem where quantum computing diverges from classical computing paradigms.

A similar problem exists when considering the influence of machine precision on classical computers on the accuracy of the result. Assembling all matrices and then using classical matrix preconditioning leads to a numerical error of the order of $\bigO(\kappa \epsilon)$, where $\epsilon$ is the precision of the machine. In contrast, directly assembling a discretization of $a(\mathcal{R}_1 \bullet, \mathcal{R}_2 \bullet)$ and $f^*(\mathcal{R}_1^* \bullet)$ -- an approach called \textit{full operator preconditioning}  \cite{MNU24} -- can avoid this dependence on $\kappa$. The same is possible on a quantum computer: by constructing a block encoding of $P_1SP_2$ and the state $\ket{P_1 r}$ directly, rather than by multiplication with the preconditioner, the problems described above can be avoided. 

For right preconditioning, the computation of the solution as $c = P_2y$ should also be avoided on quantum computers, while the measurement of $m^Tc$ for a vector $m$ may be feasible using the equivalence $m^Tc = m^TP_2y = (P_2^Tm)^Ty$. Consequently, we can obtain the result of a linear functional as long as we can efficiently construct $P_2^Tm$. However, it is unclear how a quadratic quantity of interest of the form $\melem{c}{M}{c}$, otherwise popular in quantum computing, can be computed efficiently when nontrivial right preconditioning is used.
%
%
\subsection{Optimal multilevel preconditioning}
\label{ssec:bpx}
The limitations highlighted above imply that a good preconditioner for quantum computing should result in a preconditioned matrix that is easy to construct directly and not as a product of the system matrix and some preconditioner. The BPX preconditioner \cite{BPX89} satisfies this requirement and is known to give a bounded condition number for the discrete Poisson-type problem \eqref{eq:weak-discrete} independent of the refinement level $L$. Its usual formulation is
\[ Pu = \sum_{\ell=1}^L \sum_{j=1}^{\dim \V_\ell} 2^{-\ell(2 - d)} (u, \Lambda_j^{(\ell)})_{L^2}\Lambda_j^{(\ell)}. \]
Here we assume a hierarchical decomposition of $\V_h$ based on the mesh hierarchy $\mathcal{G}_1, \dots, \mathcal{G}_L$, where $\mathcal{G}_\ell$ corresponds to the Cartesian grid of width $h_\ell=2^{-\ell}$ for $\ell=1,\ldots,L$ so that $\mathcal{G}_L = \mathcal{G}_h$ equals the finest mesh. On each mesh, we consider the Q1 finite element space~$\V_\ell=\operatorname{span}\{\Lambda_j^{(\ell)}\}_{j=1}^{\dim \V_\ell}$ spanned by the basis of hat functions $\Lambda_j^{(\ell)}$ associated with the interior vertices $x_j$ of $\mathcal{G}_\ell$. The symmetric preconditioning by $P$ leads to a bounded condition number that is independent of the number of levels $L$.
\begin{lemma}[Optimality of the BPX preconditioner] \label{lem:bpx}
Let $d \in \N, D \subset \R^d$ be a bounded domain and $\V_1 \subset \dots \subset \V_L \subset H^1_0(D)$ be finite element spaces. If for all $1 \le \ell \le L$ these spaces satisfy the Jackson estimates
\[ \inf_{v_\ell \in \V_\ell} \|u - v_\ell\|_{L^2} \le 2^{-\ell} C_1 \|u\|_{H^1}  \]
for all $u \in H^1_0(D)$ and the Bernstein estimates
\[ \|v_\ell\|_{H^1} \le 2^\ell C_2 \|v_\ell\|_{L^2} \]
for all $v_\ell \in \V_\ell$ with constants $C_1, C_2 > 0$ independent of $\ell$ and $L$, then the condition of the symmetrically preconditioned stiffness matrix
\[ \kappa(P^{1/2}SP^{1/2}) \in \bigO(1) \]
is bounded independently of $L$.
\end{lemma}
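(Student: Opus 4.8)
The plan is to reduce the condition-number bound to the $L$-independent spectral equivalence
\[ a(v,v) \;\approx\; \langle P^{-1} v, v\rangle_{L^2}, \qquad v \in \V_L, \]
and to recover this equivalence from the two hypotheses. Viewing $P$ as the $L^2(D)$-symmetric positive definite operator it induces on $\V_L$ (positivity because the level-$L$ hat functions alone span $\V_L$), a standard computation identifies $\kappa(P^{1/2}SP^{1/2})$ with the quotient $\sup_v R(v)/\inf_v R(v)$ of the generalized Rayleigh quotient $R(v)=a(v,v)/\langle P^{-1}v,v\rangle_{L^2}$, so the displayed two-sided estimate is exactly what is needed. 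Since $P=\sum_{\ell,j}\beta_{\ell,j}$ with rank-one pieces $\beta_{\ell,j}v=2^{-\ell(2-d)}(v,\Lambda_j^{(\ell)})_{L^2}\Lambda_j^{(\ell)}$ acting on the one-dimensional subspaces $\operatorname{span}\{\Lambda_j^{(\ell)}\}$, the additive Schwarz identity for an additively defined operator (P.\,L.~Lions, Nepomnyashchikh) expresses $\langle P^{-1}v,v\rangle_{L^2}$ as the infimum of $\sum_{\ell,j}2^{\ell(2-d)}\|\Lambda_j^{(\ell)}\|_{L^2}^{-2}\|v_{\ell,j}\|_{L^2}^2$ over all decompositions $v=\sum_{\ell,j}v_{\ell,j}$ into these subspaces. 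Using $\|\Lambda_j^{(\ell)}\|_{L^2}^2\approx h_\ell^d=2^{-\ell d}$, the finite overlap of the hat supports, and the equivalence of the discrete $\ell^2$-norm of nodal values with the $L^2$-norm on $\V_\ell$, the per-node redundancy collapses and one is left with
\[ \langle P^{-1}v,v\rangle_{L^2} \;\approx\; \inf\Big\{ \textstyle\sum_{\ell=1}^L 2^{2\ell}\|w_\ell\|_{L^2}^2 : v=\textstyle\sum_{\ell=1}^L w_\ell,\ w_\ell\in\V_\ell \Big\}. \]
Moreover the $L^2$-orthogonal splitting $w_\ell=(Q_\ell-Q_{\ell-1})v$, with $Q_\ell$ the $L^2(D)$-orthogonal projection onto $\V_\ell$ and $Q_0:=0$, realizes this infimum up to $L$-independent constants, so it suffices to prove the classical BPX norm equivalence
\[ a(v,v) \;\approx\; \sum_{\ell=1}^L 2^{2\ell}\big\|(Q_\ell-Q_{\ell-1})v\big\|_{L^2}^2, \qquad v\in\V_L, \]
whose two inequalities draw on the two hypotheses separately.

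For the lower bound $\sum_\ell 2^{2\ell}\|(Q_\ell-Q_{\ell-1})v\|_{L^2}^2\le C\,a(v,v)$ I would start from the fact that the detail $w_\ell:=(Q_\ell-Q_{\ell-1})v$ is $L^2$-orthogonal to $\V_{\ell-1}$, so the Jackson estimate gives $\|w_\ell\|_{L^2}=\inf_{q\in\V_{\ell-1}}\|w_\ell-q\|_{L^2}\le 2^{-(\ell-1)}C_1|w_\ell|_{H^1}$; the substance is then to pass from this per-level bound to the $L$-independent sum $\sum_\ell 2^{2\ell}\|w_\ell\|_{L^2}^2\le C|v|_{H^1}^2\le C\alpha^{-1}a(v,v)$. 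A naive term-by-term use of Jackson, $\|(I-Q_{\ell-1})v\|_{L^2}\lesssim 2^{-\ell}|v|_{H^1}$, loses a factor of $L$; removing it uses the finite-dimensionality of $\V_L$ together with the Bernstein inequality.

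For the upper bound $a(v,v)\le C\sum_\ell 2^{2\ell}\|(Q_\ell-Q_{\ell-1})v\|_{L^2}^2$ I would expand $a(v,v)=\sum_{\ell,m}a(w_\ell,w_m)$ and invoke the strengthened Cauchy--Schwarz inequality $|a(w_\ell,w_m)|\le C\,2^{2\min(\ell,m)}\|w_\ell\|_{L^2}\|w_m\|_{L^2}$ for $w_\ell\in\V_\ell$, $w_m\in\V_m$ (which contains the Bernstein inequality as its diagonal case $\ell=m$). Writing $a_\ell:=2^\ell\|w_\ell\|_{L^2}$, this reads $|a(w_\ell,w_m)|\le C\,2^{-|\ell-m|}a_\ell a_m$, and the Schur test---the row sums of $(2^{-|\ell-m|})_{\ell,m}$ being bounded by $1+2\sum_{k\ge1}2^{-k}=3$ uniformly in $L$---yields $a(v,v)\le 3C\sum_\ell a_\ell^2=3C\sum_\ell 2^{2\ell}\|w_\ell\|_{L^2}^2$. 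Combining the two bounds with $\alpha|v|_{H^1}^2\le a(v,v)\le\beta|v|_{H^1}^2$ gives $\kappa(P^{1/2}SP^{1/2})\le C\beta/\alpha\in\bigO(1)$.

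The step I expect to be the main obstacle is precisely the $L$-independence of the two one-line estimates above---the sharp summation $\sum_\ell 2^{2\ell}\|(Q_\ell-Q_{\ell-1})v\|_{L^2}^2\lesssim|v|_{H^1}^2$ and the strengthened Cauchy--Schwarz inequality with geometric decay across levels---since everything preceding them is bookkeeping. Both are classical multilevel facts (going back to \cite{BPX89}, with later simplifications by Yserentant, Bramble--Pasciak--Xu, Bornemann--Yserentant, Oswald, and Dahmen--Kunoth), and they are exactly where the Jackson and Bernstein hypotheses are consumed. In the write-up I would isolate these as two auxiliary lemmas, carry out the elementary reductions in detail, and either reproduce or cite their proofs; establishing the strengthened Cauchy--Schwarz inequality carefully---in particular controlling the face (jump) contributions that arise after integrating by parts element by element---is the part that requires the most care.
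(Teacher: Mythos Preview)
Your proposal is a correct and careful outline of the classical multilevel argument; note, however, that the paper does not give its own proof of this lemma at all but simply refers to \cite{Osw90, DK92, Xu92}. What you have written is essentially a sketch of the proofs found in those references (additive Schwarz identification of $P^{-1}$, reduction to the BPX norm equivalence via $L^2$-orthogonal detail projections, and the Jackson/Bernstein--based upper and lower bounds with strengthened Cauchy--Schwarz and a Schur test), so your approach and the paper's deferred approach coincide.
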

For a proof of this lemma we refer to \cite{Osw90, DK92, Xu92}.  Note that the hidden constant in the $\mathcal{O}$-notation may grow with the dimension $d$. The sufficient conditions for the lemma to hold are satisfied for our choice of uniformly refined grids and Q1 finite elements. The first assumption follows from the error estimate \eqref{eq:errorh}, and the second assumption is a consequence of the \textit{inverse inequality} for polynomials. 

In the classical context with the preconditioned conjugate gradient method, the explicit knowledge of $P^{1/2}$ is not required to apply symmetric preconditioning. No analogous argument is known for quantum computing. Since the non-symmetric left (or right) preconditioned system $PS$ generally has no bounded condition number, we actually have to split the preconditioner $P$ as is done for multilevel frames \cite{Osw94, HW89} used in the context of finite element methods \cite{Griebel:1994*4,Griebel:1994*2,HSS08}.
For the split one considers the corresponding generating system
\[\mathcal F \coloneqq \{f_{j,\ell} \coloneqq 2^{-\ell(2-d)/2}\Lambda_j^{(\ell)} \mid 1 \le \ell \le L, 1 \le j \le \dim \V_\ell\}.\]
Let $P \colon \R^{|\mathcal F|} \to \V_L$ be the linear function given by
\[ Fc = \sum_{\ell=1}^L\sum_{j=1}^{\dim \V_\ell} c_{j,\ell} f_{j,\ell}. \]
Then $P = FF^T$ can be written in terms of $F$ \cite[Section 4.2.1]{Osw94} and we obtain the preconditioned system
\begin{equation} \label{eq:preconditioned-system}
    F^TSF y = F^Tr, \qquad c = Fy.
\end{equation}
Note that $F$ is not equal to $P^{1/2}$, since it is not symmetric or even square. Nevertheless, Lemma \ref{lem:bpx}  serves as a bound on the condition number of \eqref{eq:preconditioned-system}, since the preconditioned matrix $F^TSF$ is spectrally equivalent to $P^{1/2}SP^{1/2}$ except for the nontrivial kernel, namely
\[\sigma(F^TSF) \setminus \sigma(P^{1/2}SP^{1/2}) = \{0\}.\]
Krylov methods in classical computers \cite{Griebel:1994*4,Griebel:1994*2}, as well as the quantum solvers based on quantum signal processing \cite{GSLW19} or adiabatic quantum computing \cite{AL22, LT20} that we intend to use, ignore the nontrivial kernel, so the effective condition of both systems is the same, and so is the effectiveness of the solvers.

\section{Quantum Realization of the Finite Element Method}
This section presents the main results of this paper, including a block encoding of the preconditioned stiffness matrix, optimal complexity bounds of the resulting quantum FEM, and an optimized quantum circuit realizing the method.

\subsection{Implementation of the preconditioned stiffness matrix} \label{sec:implementation}
This section presents a quantum implementation of the preconditioned  stiffness matrix $F^TSF$ from \eqref{eq:preconditioned-system}, which results from the finite element discretization of the partial differential equation~\eqref{eq:modelproblem} and BPX preconditioning. Since $F^TSF$ is not sparse in general, this is a nontrivial task where standard techniques are not applicable. Instead, we will make heavy use of Proposition~\ref{prop:block-encoding-ops} to decompose the preconditioned system into sums, products, and tensor products of smaller matrices, which in turn represent finite element gradients independent of the diffusion coefficient $A$ and grid transfer operators that encode a hierarchical grid structure. We give an explicit implementation of such matrices for the case of uniform Cartesian grids on the unit cube in Theorem~\ref{thm:main2}. However, the remainder of the proof given in Theorem~\ref{thm:main1} does not make any assumptions on the domain $D$.

In addition to the continuous finite element spaces $\V_\ell$, our construction relies on the auxiliary spaces of discontinuous Q1 elements $\Q_\ell \supset \V_\ell$ on $\mathcal{G}_\ell$. For these spaces of dimension~$2^{d(\ell+1)}$, $L^2$-orthonormal bases are composed of $2^d$ local basis functions $\psi_{j,1}^{(\ell)}, \dots, \psi_{j,2^d}^{(\ell)}$ per grid cell $G_j \in \mathcal{G}_\ell$, $1 \le j \le 2^{d\ell}$.
Our decomposition uses matrix representations $C_\ell$ of size $d2^{d(\ell+1)}\times (2^\ell-1)^d$ of the gradient $\nabla \colon \V_\ell \to \Q_\ell^d$ acting on finite element spaces. Furthermore, we need the $2^{d(L+1)}\times 2^{d(\ell+1)}$ transfer matrices $T_{\ell,L}$ representing the inclusion~$\Q_\ell \to \Q_L$. The vector-valued space $\Q_\ell^d = \Q_\ell \otimes \R^d$ has an orthonormal basis with basis vectors $\psi_{j,k}^{(l)} \otimes e_s$ where $e_s$ is the $s$-th canonical basis vector.

\begin{remark}[Qubit register ordering] \label{rem:specific-encoding}
We assume a special encoding of the spaces $\Q_\ell$ and $\V_\ell$ in qubit registers. Specifically,  the basis function $\Lambda_j^{(\ell)} \in \V_\ell$ should be represented (ignoring ancilla registers) by the integer of its index $\ket{j}$, the basis function $\psi_{j,k}^{(\ell)} \in \Q_\ell$ by $\ket{j}\ket{k}$ and $\psi_{j,k}^{(l)} \otimes e_s$ should be represented by $\ket{j}\ket{s}\ket{k}$. For some block encodings, we will need a different order of qubit registers, which corresponds to a permutation $\pi$ of rows and columns of the encoded matrix. The permutation $\pi$ is unitary, and a block encoding is given by a number of $\SWAP$ instructions that change the order of qubits correspondingly.
\end{remark}

The following theorem constructs an abstract block encoding of the preconditioned matrix $F^TSF$ if the matrices $C_\ell$ and $T_{\ell,L}$ are provided as block encodings $U_{C_\ell}$ and $U_{T_{\ell,L}}$ in the sense of Definition~\ref{d:block} and the diffusion coefficient $A$ is given in the format $U_A$ of Proposition~\ref{prop:block-encoding-diag}.

\begin{theorem}[Block encoding of preconditioned stiffness matrix]\label{thm:main1}
Let $L \in \N$ and $U_{C_\ell}$ and $U_{T_{\ell,L}}$ be block encodings of the finite element gradient $C_\ell$ and transfer matrices $T_{\ell,L}$  for $1 \le \ell \le L$ respectively. Let additionally $U_A$ be a quantum circuit that computes the value of the diffusion coefficient $A$. We can then construct a block encoding of the matrix~$F^TSF$ with subnormalization less than
\[\frac{\beta}{\alpha}\tilde\gamma(U_A)\sum_{\ell = 1}^L (\tilde \gamma(U_{C_\ell}) \tilde \gamma(U_{T_{\ell,L}}))^2\]
with $\alpha$ and $\beta$ defined in \eqref{eq:coer} and normalization
\[\gamma(U_A)\sum_{\ell=1}^L2^{-\ell(2-d)}(\gamma(U_{C_\ell})\gamma(U_{T_{\ell,L}}))^2.\]
\end{theorem}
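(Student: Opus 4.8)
The plan is to express the preconditioned matrix $F^TSF$ as a composition of block-encoded building blocks and then apply Proposition~\ref{prop:block-encoding-ops} repeatedly, tracking the normalization and subnormalization through each operation. The key algebraic identity to establish first is a level-wise decomposition of the stiffness matrix: since $S = (a(\Lambda_j, \Lambda_k))$ with $a(u,v) = \int_D A \nabla u \cdot \nabla v$, and since the generating system $\mathcal{F}$ groups the frame elements $f_{j,\ell} = 2^{-\ell(2-d)/2}\Lambda_j^{(\ell)}$ by level, the matrix $F^TSF$ is an $L \times L$ block matrix whose $(\ell, \ell')$ block equals $2^{-\ell(2-d)/2}2^{-\ell'(2-d)/2}(a(\Lambda_j^{(\ell)}, \Lambda_k^{(\ell')}))_{j,k}$. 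The crucial point is that each such block can be written using the gradient matrices and transfer operators as follows: for $v_\ell \in \V_\ell$ one has $\nabla v_\ell = C_\ell v_\ell \in \Q_\ell^d$ in coordinates, lifting to the finest level via $T_{\ell, L}$ gives a representation in $\Q_L^d$, and then the $L^2$ inner product weighted by $A$ on $\Q_L^d$ is a block-diagonal matrix $M_A$ (block-diagonal over the cells of $\mathcal{G}_L$, since discontinuous elements decouple across cells), so that the $(\ell,\ell')$ block of $SF$ restricted appropriately is $(T_{\ell',L}C_{\ell'})^\dagger (\Id_d \otimes M_A)(T_{\ell,L}C_\ell)$ up to the scaling factors and the permutation $\pi$ from Remark~\ref{rem:specific-encoding} that reorders the $\ket{j}\ket{s}\ket{k}$ register into the block-diagonal-friendly ordering.

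**Next I would** assemble the block encoding from the bottom up. For each fixed level $\ell$, form $T_{\ell,L}C_\ell$ by the multiplication rule \eqref{eq:op-mutiply}: because $C_\ell$ maps the $(2^\ell-1)^d$-dimensional space $\V_\ell$ into the much larger $\Q_\ell^d$, the relevant dimension inequality is $m_{C_\ell} \ge n_{C_\ell}$ so no $C_\ell$ is actually tall-and-thin in the wrong direction — here one must be careful which of the two cases in \eqref{eq:op-mutiply} applies; the product $T_{\ell,L}C_\ell$ has $\tilde\gamma = \tilde\gamma(U_{C_\ell})\tilde\gamma(U_{T_{\ell,L}})$ and $\gamma = \gamma(U_{C_\ell})\gamma(U_{T_{\ell,L}})$. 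The diffusion weight $\Id_d \otimes M_A$ is block-encoded via Proposition~\ref{prop:block-encoding-diag} using the oracle $U_A$; its normalization is $\beta$ and (since $A \ge \alpha\Id$ in the sense of \eqref{eq:coer}) its subnormalization is $\beta/\alpha$. Conjugating, i.e. forming $(T_{\ell,L}C_\ell)^\dagger (\Id_d\otimes M_A)(T_{\ell,L}C_\ell)$, uses \eqref{eq:op-transpose} once and \eqref{eq:op-mutiply} twice, contributing a factor $(\tilde\gamma(U_{C_\ell})\tilde\gamma(U_{T_{\ell,L}}))^2 \cdot (\beta/\alpha)$ to the subnormalization and $(\gamma(U_{C_\ell})\gamma(U_{T_{\ell,L}}))^2 \cdot \beta$ to the normalization of the $\ell$-th diagonal contribution. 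Incorporating the scalar weight $2^{-\ell(2-d)/2}$ on each side of the frame multiplies the normalization of the $\ell$-th term by $2^{-\ell(2-d)}$.

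**Then** the full matrix $F^TSF = \sum_{\ell=1}^L (\text{$\ell$-th term})$, realized through the linear-combination rule \eqref{eq:op-add} — or, since the frame naturally presents $F$ as a horizontal concatenation of the per-level blocks, through the concatenation rule \eqref{eq:op-block-horizontal} together with \eqref{eq:op-mutiply}, which is cleaner because it directly gives $\gamma = \sqrt{\sum_\ell \gamma_\ell^2}$ and $\tilde\gamma \le \sqrt{\sum_\ell \tilde\gamma_\ell^2}$. Substituting $\gamma_\ell = \beta\, 2^{-\ell(2-d)}(\gamma(U_{C_\ell})\gamma(U_{T_{\ell,L}}))^2$ and $\tilde\gamma_\ell = (\beta/\alpha)(\tilde\gamma(U_{C_\ell})\tilde\gamma(U_{T_{\ell,L}}))^2$ and using $\sqrt{\sum a_\ell^2} \le \sum a_\ell$ for nonnegative $a_\ell$ yields exactly the claimed bounds $\tilde\gamma \le \frac{\beta}{\alpha}\sum_\ell(\tilde\gamma(U_{C_\ell})\tilde\gamma(U_{T_{\ell,L}}))^2$ and $\gamma = \beta\sum_\ell 2^{-\ell(2-d)}(\gamma(U_{C_\ell})\gamma(U_{T_{\ell,L}}))^2$; one should double-check whether the stated normalization is the $\sqrt{\sum\cdot}$ value or its upper bound by $\sum\cdot$, and which of the composition rules the authors actually intend, since \eqref{eq:op-add} carries an extra factor $\sqrt{2(\cdot)}$ that would need separate handling.

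**The main obstacle** I anticipate is not the bookkeeping of the normalization factors — that is mechanical once the decomposition is fixed — but rather establishing the exact matrix identity $F^TSF = \sum_\ell 2^{-\ell(2-d)} \pi^\dagger(T_{\ell,L}C_\ell)^\dagger(\Id_d\otimes M_A)(T_{\ell,L}C_\ell)\pi$ with all registers, projections $\Pi_{C_\ell}, \Pi_{T_{\ell,L}}$, and the cell-blockwise structure of $M_A$ correctly aligned, including verifying that the $L^2$-orthonormality of the discontinuous bases $\psi^{(\ell)}_{j,k}$ makes the Gram matrix of $\Q_L^d$ equal to the block-diagonal $\Id_d \otimes M_A$ and that $T_{\ell,L}$ is genuinely an \emph{isometry up to the embedding} so that the inclusion $\Q_\ell \hookrightarrow \Q_L$ is represented faithfully. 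Care is also needed that the dimensions line up for every invocation of \eqref{eq:op-mutiply} (the compatibility conditions $\Pi_{B,2} = \Pi_{A,1}$, etc.), which forces the permutations of Remark~\ref{rem:specific-encoding} into the construction at precisely the right spots; a diagram of the register flow would make this rigorous. Once that identity is in hand, the subnormalization and normalization claims follow by the routine application of Proposition~\ref{prop:block-encoding-ops} sketched above.
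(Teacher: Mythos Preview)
Your proposal contains a genuine algebraic error and a missing technical step.

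First, the identity you write in the final paragraph,
\[F^TSF \;=\; \sum_{\ell=1}^L 2^{-\ell(2-d)}\,\pi^\dagger(T_{\ell,L}C_\ell)^\dagger(\Id_d\otimes M_A)(T_{\ell,L}C_\ell)\pi,\]
is false. The matrix $F^TSF$ is an $L\times L$ block matrix whose $(\ell,\ell')$ block equals $2^{-(\ell+\ell')(2-d)/2}(\tilde T_{\ell,L}C_\ell)^\dagger(D_A\otimes\Id)(\tilde T_{\ell',L}C_{\ell'})$, and the off-diagonal blocks are generically nonzero. Moreover, your summands have dimensions $(2^\ell-1)^d\times(2^\ell-1)^d$ varying with $\ell$, so the sum is not even defined, and \eqref{eq:op-add} cannot be invoked. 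The alternative you mention---building $C_F=[\,2^{-(2-d)/2}\tilde T_{1,L}C_1\;\cdots\;2^{-L(2-d)/2}\tilde T_{L,L}C_L\,]$ by horizontal concatenation \eqref{eq:op-block-horizontal} and then forming $C_F^T(D_A\otimes\Id)C_F$---is the correct one and is exactly the paper's route; your bookkeeping at the end should be redone accordingly, since the $\gamma_\ell,\tilde\gamma_\ell$ you substitute are those of the already-conjugated diagonal blocks rather than of the block-columns of $C_F$.

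Second, even along the correct route, the claimed subnormalization does not follow from two applications of \eqref{eq:op-mutiply}. Writing $C_F^T(D_A\otimes\Id)C_F$ as a triple product and applying \eqref{eq:op-mutiply} twice necessarily introduces a factor $\kappa(C_F)$, because $C_F^T$ is wide (so the bound uses $\kappa(C_F^T)=\kappa(C_F)$) while $C_F$ is tall. That extra factor is not present in the theorem's bound. The paper avoids it by exploiting the symmetric structure: since
\[|C_F^T(D_A\otimes\Id)C_F| \;=\; \bigl|(D_A^{1/2}\otimes\Id)C_F\bigr|^2 \;\ge\; \kappa(D_A^{1/2})^{-2}\,|D_A^{1/2}|^2\,|C_F|^2,\]
one obtains $\tilde\gamma(F^TSF)\le\kappa(D_A)\,\tilde\gamma(D_A)\,\tilde\gamma(C_F)^2\le(\beta/\alpha)\,\tilde\gamma(C_F)^2$ directly from the definition of $\tilde\gamma$, without ever invoking \eqref{eq:op-mutiply} on the outer product. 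The square root $D_A^{1/2}$ is used only for this norm estimate and never appears in the circuit. This symmetrization is the key technical step your plan is missing.
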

\begin{proof}
Due to the choice of an orthonormal basis, the $L^2(D)$ inner product of two $\Q_L$ functions is equal to the Euclidean inner product of their coefficients in basis representations. Recall that we assume the diffusion coefficient $A$ to be constant on the mesh cells meaning under the chosen basis that it corresponds to a block diagonal matrix $D_A$ with the $d \times d$ blocks $(D_A)_{jj}=A\vert_{G_j}$ representing the diffusion coefficient in the $j$-th grid cell $G_j\in \mathcal{G}_L$. We thus obtain a decomposition
\begin{equation} \label{eq:decomposition}
S = C_L^T(D_A \otimes \Id_{2^d})C_L
\end{equation}
of the stiffness matrix. Using Proposition \ref{prop:block-encoding-diag}, the block diagonal matrix $D_A$ can be realized efficiently on a quantum computer. The decomposition \eqref{eq:decomposition} gives rise to a decomposition of the preconditioned system
\begin{equation} \label{eq:main1:factorization}
F^TSF = C_F^T (D_A \otimes \Id_{2^d})C_F
\end{equation}
with $C_F = C_LF$. The factors $C_L$ and $F$ of $C_F$ are not well conditioned, so we can not gain an efficient implementation of $C_F$ using just \eqref{eq:op-mutiply}. Still, we can decompose this matrix further by using the hierarchical structure induced by the preconditioner.
The blocks of $F$ correspond, up to a scalar factor, to the inclusions $\V_\ell \to \V_L$ for $1 \le \ell \le L$.  It follows that $C_F$  also has a $1 \times L$ block structure where the $\ell$-th block for $1 \le \ell \le L$ corresponds to the gradient on the level $\ell$, specifically the function $\nabla \colon \V_\ell \to \Q_L^d$, scaled by a factor of $2^{-\ell(2-d)/2}$. This factors through $\Q_\ell^d$, which can be expressed as the following commutative diagram.
\[
\begin{tikzcd}
	\V_\ell \rar[hook]{F\vert_{\V_\ell}} \dar[swap]{2^{-\ell(2-d)/2}C_\ell} & \V_L \dar{C_L} \\
	\Q_\ell^d \rar[hook]{\tilde T_{\ell,L}} & \Q_L^d
\end{tikzcd}
\]
The matrix for the inclusion of the vector-valued function spaces $\Q_\ell^d \hookrightarrow \Q_L^d$ is given by
\[\tilde T_{\ell,L} \coloneqq \pi^T(\Id_d \otimes T_{\ell,L}) \pi\]
where $\pi$ is the permutation of qubits $\ket{j}\ket{s}\ket{k} \mapsto \ket{s}\ket{j}\ket{k}$, see also Remark \ref{rem:specific-encoding}. Thus, we can write
\[ C_F = \begin{bmatrix} 2^{-(2-d)/2}\tilde T_{1, L} C_1 & \dots & 2^{-\ell(2-d)/2}\tilde T_{\ell, L} C_\ell & \dots & 2^{-L(2-d)/2}\tilde T_{L,L} C_L \end{bmatrix}.\]
We would now like to estimate the subnormalization of the construction. First, note that the mapping $\Q_\ell \hookrightarrow \Q_L$ preserves the $L^2(D)$ inner product since it is an inclusion of spaces. As under the chosen bases of both $\Q_\ell$ and $\Q_L$, the $L^2$ product is equal to the Euclidean inner product, we can deduce that the columns of $T_{\ell,L}$ are orthonormal. This means that the matrices are perfectly conditioned, i.e.
\[ \kappa(T_{\ell,L}) = \kappa(\tilde T_{\ell,L}) = 1. \]
As such we can use \eqref{eq:op-mutiply} and \eqref{eq:op-block-horizontal} of Proposition~\ref{prop:block-encoding-ops} to see that the block encoding of $C_F$ has subnormalization
\[\tilde \gamma(C_F) = \bigg(\sum_{\ell = 1}^L (\tilde \gamma(U_{C_\ell}) \tilde \gamma(U_{T_{\ell,L}}))^2\bigg)^{1/2}.\]
The statement for the normalization then follows from using \eqref{eq:op-mutiply} of Proposition \ref{prop:block-encoding-ops} again on the factorization \eqref{eq:main1:factorization}. For the bound of the subnormalization, we use
\[|F^TSF| = |C_F^T(D_A \otimes \Id_{2^d})C_F| = \big|\big(D_A^{1/2} \otimes \Id_{2^d}\big) C_F\big|^2 \ge \kappa\big(D_A^{1/2}\big)^{-2}\,\big|D_A^{1/2}\big|^2\,|C_F|^2. \]
Note that the symmetric decomposition of $D_A^{1/2}$ is only for theoretical purposes and is not needed for the quantum computation. The subnormalization can then be estimated as
\begin{align*}
\tilde \gamma(C_F^T(D_A \otimes \Id_{2^d})C_F) &= \frac{\gamma(C_F^T(D_A \otimes \Id_{2^d})C_F))}{|C_F^T(D_A \otimes \Id_{2^d})C_F|} \le \frac{\kappa(D_A)\gamma(U_A)\gamma(C_F)^2}{|D_A||C_F|^2}\\
&
= \kappa(D_A)\tilde \gamma(U_A)\tilde \gamma(C_F)^2 \le \frac{\beta}{\alpha}\tilde\gamma(U_A)\tilde \gamma(C_F)^2. \qedhere
\end{align*}
\end{proof}

We emphasize that the coefficient $A$ does not have to conform to any special structure, such as being tensorial in the spatial dimensions, in order for such an implementation to be possible. Instead, Proposition \ref{prop:block-encoding-diag} merely requires that $A$ as a function $\R^d \to \R^{d \times d}$ is computable efficiently.

Up until now, the treatment of the implementation was independent of the domain $D$; however, implementing the matrices $T_{\ell,L}$ and $C_\ell$ now strongly depends on the structure of the mesh. As such, we will now again consider the simplest case of a uniform grid on the $d$-dimensional cube $[0, 1]^d$. The same approach should work for more general meshes; however, see also the comment at the end of Section~\ref{sec:encoding} and Remark \ref{rem:general-domain}.

\begin{theorem}[Block encoding of preconditioned stiffness matrix on Cartesian grids] \label{thm:main2}
Let $L, d \in \N$, $D \coloneqq [0, 1]^d$ and $\mathcal{G}_L$ be the regular grid on $D$. We can then construct a block encoding of $F^TSF$ with subnormalization less than $\tfrac{\beta}{\alpha}\tilde\gamma(U_A)d(L + \tfrac14\pi^2)$ and normalization~$4\gamma(U_A) dL$.
\end{theorem}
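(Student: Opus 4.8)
The plan is to specialize Theorem~\ref{thm:main1} to the uniform Cartesian grid: it suffices to exhibit explicit block encodings $U_{C_\ell}$ of the finite element gradients $C_\ell$ and $U_{T_{\ell,L}}$ of the discontinuous transfer operators $T_{\ell,L}$, to bound their normalization and subnormalization factors, and then to evaluate the two sums appearing in that theorem (the diffusion coefficient enters only through the factors $\beta/\alpha$ and $\beta$ already present there). Everything hinges on the tensor product structure of the discretization on $[0,1]^d$: writing $\V_\ell = (\V^{\mathrm{1d}}_\ell)^{\otimes d}$ and $\Q_\ell = (\Q^{\mathrm{1d}}_\ell)^{\otimes d}$ with the one-dimensional continuous hat-function space $\V^{\mathrm{1d}}_\ell$ (interior nodes only) and the one-dimensional discontinuous Q1 space $\Q^{\mathrm{1d}}_\ell$, every structural matrix factors into a $d$-fold tensor product of small, explicitly writable one-dimensional matrices, to which \eqref{eq:op-tensor} of Proposition~\ref{prop:block-encoding-ops} applies.

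For the transfer operators, $\Q_\ell \subset \Q_L$, and in the $L^2$-orthonormal bases the one-dimensional inclusion $\Q^{\mathrm{1d}}_\ell \hookrightarrow \Q^{\mathrm{1d}}_L$ is a structured block-banded matrix $T^{\mathrm{1d}}_{\ell,L}$ with orthonormal columns (it expands the two Legendre modes on a coarse cell in the $2^{L-\ell}$ mode pairs of its refinement); it therefore admits an isometric block encoding with $\gamma = \tilde\gamma = 1$, and by \eqref{eq:op-tensor} together with the $\SWAP$ reorderings of Remark~\ref{rem:specific-encoding} (unitary, changing neither factor) we get $\gamma(U_{T_{\ell,L}}) = \tilde\gamma(U_{T_{\ell,L}}) = 1$. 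For the gradient, the $s$-th component $\partial_s \colon \V_\ell \to \Q_\ell$ acts — up to the qubit permutation of Remark~\ref{rem:specific-encoding} — as the one-dimensional derivative $D^{\mathrm{1d}}_\ell \colon \V^{\mathrm{1d}}_\ell \to \Q^{\mathrm{1d}}_\ell$ on the $s$-th factor and as the one-dimensional inclusion $\iota^{\mathrm{1d}}_\ell \colon \V^{\mathrm{1d}}_\ell \to \Q^{\mathrm{1d}}_\ell$ on the other $d-1$ factors. Both are bidiagonal-type matrices with $\bigO(1)$ nonzeros per row and column that can be written down by hand and block-encoded with $\CNOT{\Pi}$ gates flagging the interior-node subspace; identifying $(D^{\mathrm{1d}}_\ell)^\dagger D^{\mathrm{1d}}_\ell$ with $2^\ell$ times the one-dimensional discrete Laplacian and $(\iota^{\mathrm{1d}}_\ell)^\dagger \iota^{\mathrm{1d}}_\ell$ with the one-dimensional mass matrix gives $|D^{\mathrm{1d}}_\ell| \le 2\cdot 2^{\ell/2}$ and $|\iota^{\mathrm{1d}}_\ell| \le 2^{-\ell/2}$, so these encodings may be taken tight ($\gamma$ equal to the spectral norm, hence $\tilde\gamma = 1$). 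By \eqref{eq:op-tensor} each $\partial_s$ then carries a block encoding with $\tilde\gamma = 1$ and $\gamma \le 2\cdot 2^{\ell(2-d)/2}$, and stacking the $d$ components via \eqref{eq:op-transpose} and \eqref{eq:op-block-horizontal} (followed by the register reordering of Remark~\ref{rem:specific-encoding}) yields a block encoding of $C_\ell$ with $\tilde\gamma(U_{C_\ell}) \le \sqrt d$ and $\gamma(U_{C_\ell}) \le 2\sqrt d\, 2^{\ell(2-d)/2}$.

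It remains to insert these bounds into Theorem~\ref{thm:main1}. Using $\tilde\gamma(U_{T_{\ell,L}}) = \gamma(U_{T_{\ell,L}}) = 1$, the subnormalization is at most $\tfrac\beta\alpha \sum_{\ell=1}^L \tilde\gamma(U_{C_\ell})^2 \le \tfrac\beta\alpha \sum_{\ell=1}^L d = \tfrac\beta\alpha dL$, while the normalization is at most $\beta \sum_{\ell=1}^L 2^{-\ell(2-d)} \gamma(U_{C_\ell})^2 \le \beta \sum_{\ell=1}^L 4d \le 2^{d+1}\beta dL$, the last step because $4 \le 2^{d+1}$ for $d \ge 1$. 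The decisive point is that the BPX prefactor $2^{-\ell(2-d)}$ exactly cancels the $2^{\ell(2-d)}$ growth of $\gamma(U_{C_\ell})^2$, so that the sum over levels contributes only a factor $L$ rather than something exponential in $L$. The genuinely new work — and the place where the constants $\sqrt d$ and $2^{d+1}$ are earned — is the \emph{one-dimensional} step: writing out the small matrices $D^{\mathrm{1d}}_\ell$, $\iota^{\mathrm{1d}}_\ell$, $T^{\mathrm{1d}}_{\ell,L}$ together with the matching $\CNOT{\Pi}$ gates and verifying, via the mass/stiffness matrix identities, that their normalizations may be taken equal to the spectral norms; once that is in place, the assembly above is routine bookkeeping with Proposition~\ref{prop:block-encoding-ops}.
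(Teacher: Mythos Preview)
Your overall architecture matches the paper's: exploit the tensor structure $\V_\ell = (\V^{\mathrm{1d}}_\ell)^{\otimes d}$, $\Q_\ell = (\Q^{\mathrm{1d}}_\ell)^{\otimes d}$, write $C_\ell$ as a stack of $d$ tensor products of the one-dimensional derivative and inclusion operators, observe that $T_{\ell,L}$ is a tensor power of a one-dimensional isometry, and feed the resulting (sub)normalizations into Theorem~\ref{thm:main1}. The arithmetic in your final paragraph is correct.

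The gap is the sentence ``these encodings may be taken tight ($\gamma$ equal to the spectral norm, hence $\tilde\gamma = 1$)''. Knowing $|D^{\mathrm{1d}}_\ell| \le 2\cdot 2^{\ell/2}$ and $|\iota^{\mathrm{1d}}_\ell| \le 2^{-\ell/2}$ via the mass/stiffness identities tells you nothing about whether an \emph{efficient} block encoding achieving that normalization exists; a generic sparse-matrix encoding produces $\gamma$ proportional to sparsity times maximum entry, not to the spectral norm, and there is no mechanism by which merely computing $|M|$ upgrades an encoding of $M$. This is exactly the step the paper does not wave over: it factors each one-dimensional matrix explicitly as a constant $2\times 2$ block (tensored with identity, hence encodable with $\tilde\gamma = 1$ because it is of fixed size) times the stacked shift $\bigl[\begin{smallmatrix}I_\ell\\N_\ell\end{smallmatrix}\bigr]$, whose columns are orthogonal (so $\kappa = 1$), and then invokes the multiplication rule \eqref{eq:op-mutiply}. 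That factorization---separating the fixed local stencil from the combinatorics of which two cells a hat function touches---is the missing idea in your sketch, and without it the bound $\tilde\gamma(U_{C_\ell}) \le \sqrt d$ is unsupported. The same remark applies, more mildly, to your handling of $T^{\mathrm{1d}}_{\ell,L}$: an isometry can abstractly be completed to a unitary, but the paper supplies the concrete level-by-level factorization $T_{L-1,L,\text{1D}}\cdots T_{\ell,\ell+1,\text{1D}}$, each factor an explicit $4\times 2$ block tensored with identity, which is what makes the encoding implementable rather than merely existent.
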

\begin{proof}
For our construction, we will exploit the tensorial structure of the problem, namely that the spaces $\V_\ell$ and $\Q_\ell$ can be decomposed as
\begin{equation} \label{eq:main2:tensor-structure}
\V_\ell = \V_{\ell,\text{1D}}^{\otimes d} \qquad \text{and} \qquad \Q_\ell = \Q_{\ell,\text{1D}}^{\otimes d}
\end{equation}
where $\V_{\ell,\text{1D}}$ and $\Q_{\ell,\text{1D}}$ are the finite elements spaces of P1 and discontinuous P1 finite elements on the one-dimensional domain $[0, 1]$ with cells of width $2^{-\ell}$. In terms of the encoding of $\V_\ell$ and $\Q_\ell$ given in Remark \ref{rem:specific-encoding} this means that the register corresponding to the spatial index $1 \le j \le 2^{d\ell}$ is decomposed into registers storing the index in each of the spatial dimensions
\[\ket{j} = \ket{j_1}\dots\ket{j_d}.\]
Similarly, for a basis function $\psi_{j,k}^{(\ell)} \in \Q_\ell$, the index $1 \le k \le 2^d$ is decomposed into $d$ 1-qubit registers
\[\ket{k} = \ket{k_1}\dots\ket{k_d}.\]
To use Theorem \ref{thm:main1} we now have to choose an orthonormal basis of $\Q_\ell$, and we do so by choosing a basis on $\Q_{\ell,\text{1D}}$. Consider the functions $\psi_0(x) = 1$ and $\psi_1(x) = 2\sqrt{3}x - \sqrt{3}$ on the reference cell $[0, 1]$. One can check that these are indeed orthonormal. For $1 \le j \le 2^\ell$, let $\phi_j$ be the linear mapping from $[0, 1]$ to $[(j-1)2^{-\ell}, j2^{-\ell}]$. Then $(2^{\ell/2} \phi_j \psi_k)_{1\le j \le 2^\ell, k = 0,1}$ is an orthonormal basis of $\Q_{\ell,\text{1D}}$. This induces an orthonormal basis of $\Q_\ell$ through the relation \eqref{eq:main2:tensor-structure}.
With this choice of basis, we also introduce one-dimensional equivalents of relevant matrices. Specifically, for $1 \le \ell \le L$, we write  $T_{\ell,k,\text{1D}}$ for the $2^{\ell+1} \times 2^{k+1}$ matrix encoding the inclusion $\Q_{\ell,\text{1D}} \subset \Q_{k, \text{1D}}$ and extend this definition to $\ell \le k \le L$. In addition, we define $C_{\ell,\text{1D}}$ and $R_{\ell,\text{1D}}$ as the $2^{\ell+1} \times 2^\ell - 1$ matrices encoding the derivative~$\partial_1 \colon \V_{\ell,\text{1D}} \to \Q_{\ell,\text{1D}}$ and the inclusion $\V_{\ell,\text{1D}} \subset \Q_{\ell,\text{1D}}$.
To match our assumption in Remark \ref{rem:specific-encoding} on the ordering of the basis elements, the induced basis of $\Q_\ell$ needs a slight reordering by permutation
\[ \pi_\ell \colon \ket{j_1}\dots\ket{j_d}\ket{k_1}\dots\ket{k_d} \mapsto \ket{j_1}\ket{k_1}\dots\ket{j_d}\ket{k_d} \]
which groups the indices belonging to one dimension together.

In addition to the tensor product structure of the function spaces, we also obtain a tensor structure of the gradient since
\[ (\nabla v)_s = \partial_s v = \underbrace{\Id \otimes \dots \otimes \Id}_{s-1} \otimes \partial_1 \otimes \underbrace{\Id \otimes \dots \otimes \Id}_{d - s} \]
for $1 \le s \le d$. On the matrix level, this means that $\pi_\ell C_\ell$ is a $d \times 1$ block matrix, where the $s$-th block is given by
\[ \underbrace{R_{\ell,\text{1D}} \otimes \dots \otimes R_{\ell,\text{1D}}}_{s-1} \otimes C_{\ell,\text{1D}} \otimes \underbrace{R_{\ell,\text{1D}} \otimes \dots \otimes R_{\ell,\text{1D}}}_{d - s}. \]
Additionally, the matrices $R_{\ell,\text{1D}}$ and $C_{\ell,\text{1D}}$ can be given explicitly as
\begin{align*}
    R_{\ell,\text{1D}} &=
2^{-\ell/2} \left(\Id_{2^\ell} \otimes\begin{bmatrix} \tfrac12 & \tfrac12 \\ \tfrac1{2\sqrt{3}} & -\tfrac1{2\sqrt{3}} \end{bmatrix}\right)
\begin{bmatrix} I_\ell \\ N_\ell \end{bmatrix} \\
C_{\ell,\text{1D}} &= 2^{\ell/2} \left(\Id_{2^\ell} \otimes\begin{bmatrix} 1 & -1 \\ 0 & 0 \end{bmatrix}\right)
\begin{bmatrix} I_\ell \\ N_\ell \end{bmatrix}
\end{align*}
where $I_\ell = \Id_{2^\ell,2^\ell-1}$ and $N_\ell \in \R^{2^\ell \times 2^\ell-1}$ is the matrix of the mapping $\ket{k} \mapsto \ket{k+1}$,
\[
I_\ell \coloneqq \begin{bmatrix}
1 & \dots & 0 \\
0 &\ddots & \vdots \\
\vdots & & 1 \\
0 & \dots & 0
\end{bmatrix}
\qquad \text{and} \qquad
N_\ell \coloneqq \begin{bmatrix}
0 & \dots & 0 \\
1 & & \vdots \\
\vdots &\ddots & 0 \\
0 &\dots & 1
\end{bmatrix}.
\]
Block encodings of both $N_\ell$ and $I_\ell$ can be implemented, even without the use of extra ancilla qubits. See for example \cite{Dra00} for an implementation of integer addition.
We can easily check that the encoding for $R_{\ell,\text{1D}}$ has a normalization of $2^{-\ell/2}$ while the encoding for $C_{\ell,\text{1D}}$ has a normalization of $2^{\ell/2+1}$ leading by \eqref{eq:op-block-horizontal} to $2^{-\ell(2-d)/2} \gamma(U_{C_\ell}) = 2\sqrt{d}$. To obtain a good bound on the subnormalization of these matrices, we do not use Proposition~\ref{prop:block-encoding-ops} but instead compute the subnormalization by its definition. Specifically, since $C_{\ell,\text{1D}}^TC_{\ell,\text{1D}}$ is a tridiagonal Toeplitz matrix, we can compute its largest eigenvalue \cite{Yueh05} as
\[ |C_{\ell,\text{1D}}^TC_{\ell,\text{1D}}| = 2^{\ell+2} - 2^{\ell+2}\sin^2\big(2^{-(\ell+1)}\pi\big) \]
and by $\sin(x) \le x$ we get
\[
|C_{\ell,\text{1D}}^TC_{\ell,\text{1D}}| \ge 2^{\ell+2} - 2^{-\ell}\pi^2
\]
so the subnormalization of this block encoding is bounded by $\sqrt{1 + \tfrac14 2^{-2\ell}\pi^2}$. Similarly, considering the eigenvalues of $R_{\ell,\text{1D}}^TR_{\ell,\text{1D}}$ we get
\[
|R_{\ell,\text{1D}}^TR_{\ell,\text{1D}}| \ge 2^{-\ell} - \tfrac{1}{3} 2^{-3\ell}\pi^2,
\]
which means that the subnormalization of the block encodings is bounded by $\sqrt{1 + \tfrac{1} {3} 2^{-2\ell}\pi^2}$. By \eqref{eq:op-block-horizontal} we can then see that
\[
\tilde \gamma(U_{C_\ell}) \le \sqrt{d\big(1 + \tfrac13 2^{-2\ell}\pi^2\big)}. 
\]

Finally, the matrices $T_{\ell,L}$ can also be factorized. Again, due to \eqref{eq:main2:tensor-structure}, we get the property $\pi_L T_{\ell,L} \pi_\ell^T = T_{\ell,L,\text{1D}}^{\otimes d}$. This can be further decomposed using $\Q_{\ell, \text{1D}} \hookrightarrow \Q_{\ell+1, \text{1D}} \hookrightarrow \dots \hookrightarrow \Q_{L, \text{1D}}$ into 
\[\pi_LT_{\ell,L}\pi_\ell^T = (T_{L-1,L, \text{1D}} \dots T_{\ell,\ell+1, \text{1D}})^{\otimes d}.\]
The matrices $T_{\ell,\ell+1,\text{1D}}$ are explicitly given by
\[ T_{\ell,\ell+1, \text{1D}} = \Id_{2^\ell} \otimes \frac{1}{\sqrt{2}}\begin{bmatrix}
1&-\sqrt{3}/2 \\
0&1/2 \\
1&\sqrt{3}/2 \\
0&1/2
\end{bmatrix}.\]
They have orthonormal columns and thus a condition number of $1$. This means $T_{\ell,L}$ has condition number $1$ and the block encoding has normalization and subnormalization $1$.

By Theorem \ref{thm:main1} the normalization for the block encoding of the preconditioned stiffness matrix is then $4\gamma(U_A) dL$. The subnormalization is computed as
\begin{align*}
\frac{\beta}{\alpha}\tilde\gamma(U_A)\sum_{\ell=1}^L \tilde \gamma(U_{C_\ell})^2
&= \frac{\beta d\tilde\gamma(U_A)}{\alpha} \bigg(L + \frac{\pi^2}{3}\sum_{\ell=1}^L 2^{-2\ell} \bigg)
\le \frac{\beta d\tilde\gamma(U_A)}{\alpha}\Big(L + \frac{\pi^2}{4}\Big)
\end{align*}
where the sum was bounded by a geometric series.
\end{proof}

\begin{remark} \label{rem:general-domain}
Our technique can be easily adapted to domains $D \subset [0, 1]^d$ such that $D$ is a union of cells at some level $\ell_0$. This is because the matrices $T_{\ell,L}$ and $C_\ell$ for $\ell_0 \le \ell \le L$ are obtained from the ones defined in Theorem \ref{thm:main2} by a simple projection. We would need to truncate the hierarchical sequence of meshes at $\ell_0$, meaning that the condition number would not be independent of $\ell_0$.
\end{remark}

\subsection{Complexity of quantum FEM}

We will now comment on the complete algorithm and measurement procedure.  As stated earlier, because of the symmetric preconditioning, it is not possible to measure quantities of interest of the form~$\bra{c}\!M\!\ket{c}$, but rather the goal should be computing a linear functional $m^Tc$ of the solution.

\begin{theorem}[Complexity of quantum FEM] \label{thm:main3}
Let $\tol > 0$, $d \in \N$, and $D = [0, 1]^d$.
Let $U_r$ and $U_m$ be quantum circuits that construct the normalized states corresponding to the preconditioned right-hand side and the quantity of interest, $\tilde r \coloneqq F^T r$ and $\tilde m \coloneqq F^T m$, as outlined in Appendix~\ref{app:implementation-vectors}. Furthermore, let $U_A$ be a quantum circuit that computes the value of the coefficient as defined in Definition~\ref{prop:block-encoding-diag}. The quantity of interest can then be computed as the amplitudes of a qubit up to precision $\tol$ using $\bigO\big(\poly\log\tol^{-1}\big)$ uses of $U_r$, $U_m$, $\bigO(1)$ uses of $U_A$, and $\bigO\big(\poly \log \tol^{-1}\big)$ additional gates. To measure the amplitude, the algorithm is repeated $\bigO\big(\tol^{-1}\big)$ times, leading to a total runtime of $\bigO\big(\tol^{-1}\poly\log\tol^{-1}\big)$.
\end{theorem}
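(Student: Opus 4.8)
The plan is to assemble the pieces already in place: Theorem~\ref{thm:main2} gives a block encoding of $F^TSF$ with explicit normalization and subnormalization that are both $\bigO(dL)$, and hence polylogarithmic in the problem size; Theorem~\ref{thm:qlsa} gives a quantum linear system algorithm with query complexity linear in the effective condition number $\kappa \coloneqq \tilde\gamma(U_{F^TSF})\,\kappa(F^TSF)$; and Lemma~\ref{lem:bpx} certifies that $\kappa(F^TSF) = \kappa(P^{1/2}SP^{1/2}) \in \bigO(1)$ independent of $L$, so that $\kappa_d \coloneqq \kappa(F^TSF)$ is an $L$-independent constant (depending on $d$, $\alpha$, $\beta$). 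Crucially, though, the claimed complexity is $\bigO(\sqrt{\kappa_d}\,\poly\log(\kappa_d\tol^{-1}))$ rather than $\bigO(\kappa_d\,\poly\log)$; since $\tilde\gamma(U_{F^TSF}) = \bigO(dL)$ is itself already bounded and $\kappa_d = \bigO(1)$, the factor $\sqrt{\kappa_d}$ versus $\kappa_d$ is immaterial asymptotically — I will simply invoke Theorem~\ref{thm:qlsa} to produce the normalized state $\ket{y}$ approximating $F^TSF\,y = F^Tr$ up to the required tolerance, using $\bigO(\sqrt{\kappa_d}\poly\log(\kappa_d\tol^{-1}))$ (equivalently $\bigO(\poly\log)$) queries to the block encoding of $F^TSF$ and to the state-preparation oracle for $\ket{F^Tr}$.

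The first substantive step is to produce the right-hand-side oracle. The preconditioned system \eqref{eq:preconditioned-system} has right-hand side $F^Tr$, whose entries are the $L^2$ pairings $(f,\Lambda_j^{(\ell)})_{L^2}$ scaled by $2^{-\ell(2-d)/2}$; using the factorizations from the proof of Theorem~\ref{thm:main2} (the $1\times L$ block structure of $F$ and the tensor-product structure of the level-$\ell$ inclusions), together with the vector-encoding constructions of Appendix~\ref{app:implementation-vectors}, the state $\ket{F^Tr}$ is prepared from $U_r$ with $\bigO(1)$ overhead. Next, I compose the quantum linear system solver: feeding the block encoding of $F^TSF$ from Theorem~\ref{thm:main2} (built from $U_{C_\ell}$, $U_{T_{\ell,L}}$ — all explicit and ancilla-free per that proof — and $U_A$, which is queried only $\bigO(1)$ times since $D_A$ enters once in \eqref{eq:main1:factorization}) and the oracle for $\ket{F^Tr}$ into Theorem~\ref{thm:qlsa} yields $\ket{y}$. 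Then, since $m^Tc = m^TFy = (F^Tm)^Ty$, I prepare the state $\ket{F^Tm}$ from $U_m$ — again using the block structure of $F$ and Appendix~\ref{app:implementation-vectors} — and estimate the inner product $(F^Tm)^Ty$ by encoding it as the phase of an ancilla qubit via the Hadamard/Swap test; by the phase-estimation lower bound cited after Theorem~\ref{thm:qlsa}, resolving this phase to accuracy $\tol$ needs $\bigO(\tol^{-1})$ repetitions of the whole procedure.

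Accounting: each run of the solver costs $\bigO(\sqrt{\kappa_d}\poly\log(\kappa_d\tol^{-1}))$ queries to the $F^TSF$-encoding, hence that many uses of $U_r$, $U_m$ (through $\ket{F^Tr}$, $\ket{F^Tm}$), and of the structural circuits, plus the same order of additional gates; $U_A$ is used $\bigO(1)$ times per run because it appears only once in the factorization; and the $\bigO(\tol^{-1})$ outer repetitions give the total runtime $\bigO(\sqrt{\kappa_d}\,\tol^{-1}\poly\log(\kappa_d\tol^{-1}))$. The main obstacle — really the only place where care is needed — is the error bookkeeping: the tolerance $\tol$ in the final phase estimate must be split between the quantum-linear-system solver error (which enters the estimate of $(F^Tm)^Ty$ through the norm $|S^{-1}\ket{r}|$ and the Lipschitz dependence of the inner product on $\ket{y}$), the state-preparation errors in $\ket{F^Tr}$ and $\ket{F^Tm}$, and the phase-estimation error itself, and one must verify that demanding each of these be $\bigO(\tol)$ only costs the claimed polylogarithmic factor; this is where the boundedness of $\kappa_d$ and of the subnormalization $\tilde\gamma(U_{F^TSF}) = \bigO(dL)$ are used to control the amplification of input errors. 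The allowed failure probability of the solver, independent of $\kappa_d$ and $\tol$, is handled by the standard majority-vote boosting, absorbed into the $\poly\log$ factor.
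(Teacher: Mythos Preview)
Your approach differs from the paper's in a way that matters for the stated bounds. You solve the full preconditioned system $F^TSF\,y = F^Tr$ with the QLSA and then evaluate $(F^Tm)^Ty$. The paper instead exploits the factorization $F^TSF = C_F^T(D_A\otimes\Id)C_F$ and rewrites the quantity of interest as $(C_F^+m)^T(D_A^{-1}\otimes\Id)(C_F^+r)$, so the QLSA is applied to the \emph{rectangular} factor $C_F$ (twice, in parallel for $r$ and $m$), whose condition is $\bigO(\sqrt{\kappa_d})$ rather than $\kappa_d$. This is precisely where the $\sqrt{\kappa_d}$ in the statement comes from, and your dismissal of the difference as ``immaterial asymptotically'' does not establish the bound as written: the theorem tracks the $\kappa_d$-dependence explicitly because that constant may grow with $d$.

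More concretely, your accounting for $U_A$ is wrong. You claim $U_A$ is used $\bigO(1)$ times ``because it appears only once in the factorization'', but in your scheme the QLSA queries the block encoding of $F^TSF$ a total of $\bigO(\kappa_d\,dL\,\poly\log(\kappa_d\tol^{-1}))$ times, and \emph{each} such query invokes $U_A$ once; so $U_A$ is called $\bigO(\poly\log)$ times, not $\bigO(1)$. The paper avoids this because the QLSA acts on $C_F$, which is independent of $A$; the coefficient enters only once, at the measurement step, via $D_A^{-1}$ (or equivalently through the block system with $C_F D_A^{1/2}$ and a single Hadamard-test evaluation). That separation is the missing idea in your proposal.
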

\begin{proof}
We decompose the linear system as described in Theorems \ref{thm:main1} and \ref{thm:main2}, specifically in the form of
\begin{equation} \label{eq:main3:decomposition}
    F^TSF = C_F^T(D_A \otimes \Id_{2^d})C_F
\end{equation}
by which we also get block encodings for all factors.
As mentioned in Section~\ref{ssec:bpx}, while the solution to the system
\[
F^TSFy = F^Tr = \tilde r
\]
is unique only up to the kernel of $F$, we still obtain $c = Fy$ for any such solution~$y$. We take $y = (F^TSF)^+\tilde r$ and write
\[ m^Tc = m^TS^{-1}r = m^TF (F^TSF)^+ \tilde r = \tilde m^T (F^TSF)^+ \tilde r. \]
Inserting the decomposition \eqref{eq:main3:decomposition} of the preconditioned matrix gives
\[m^Tc = {\tilde m}^T(C_F^T(D_A \otimes \Id_{2^d})C_F)^+\tilde r. \]
A block encoding for this pseudoinverse can be constructed using the previous theorems, but if an encoding of $D_A^{1/2}$ is available, a faster algorithm is obtained by decomposing this system further.
We can use the fact that $(Y^T)^+Y^+ = (YY^T)^+$ for any matrix $Y$, c.f.~\cite{Gre66}, to see
\[m^Tc = (((D_A^{1/2} \otimes \Id_{2^d})C_F)^+\tilde m)^T((D_A^{1/2} \otimes \Id_{2^d})C_F)^+\tilde r.  \]
By Lemma~\ref{lem:bpx} we know $\kappa((D_A^{1/2} \otimes \Id_{2^d})C_F) = \sqrt{\kappa(F^TSF)} = \bigO(1)$ and using Theorems~\ref{thm:main1} and~\ref{thm:main2} we can construct a block encoding where the subnormalization is similarly bounded by $\bigO(\log\tol^{-1})$. As such we can first compute the least square solutions $((D_A^{1/2} \otimes \Id_{2^d})C_F)^+\tilde m$ and $((D_A^{1/2} \otimes \Id_{2^d})C_F)^+\tilde r$ in parallel, which takes $\bigO\big(\poly\log\tol^{-1}\big)$
uses of either oracles $U_r$ and $U_m$ and a proportional number of additional operations following Theorem \ref{thm:qlsa}. The result is a state of the form
\[
(((D_A^{1/2} \otimes \Id_{2^d})C_F)^+ \otimes \Id_2)
\begin{bmatrix}
    \tilde m \\
    \tilde r
\end{bmatrix}
=
\begin{bmatrix}
    ((D_A^{1/2} \otimes \Id_{2^d})C_F)^+\tilde m \\
    ((D_A^{1/2} \otimes \Id_{2^d})C_F)^+\tilde r
\end{bmatrix}
\]
The quantity of interest is computed by measuring with the observable $\Id \otimes X$. This procedure gives a roughly quadratic improvement in runtime with respect to condition~$\kappa(F^TSF)$ over simply computing the solution $\ket{c}$ and measuring using a swap test. The approach is similar to the ``Modified Hadamard test'' described in~\cite{Luo20}. This can be combined with Monte Carlo sampling to give a runtime roughly square in $\tol^{-1}$, for which the complete circuit can be seen in Figure \ref{fig:complete-algorithm}.
Alternatively, amplitude estimation \cite{SUR+20} only requires $\bigO(\tol^{-1})$ runs of the Hadamard test, giving the optimal runtime of $\bigO(\tol^{-1}\poly\log \tol^{-1})$.
\end{proof}

\begin{figure}
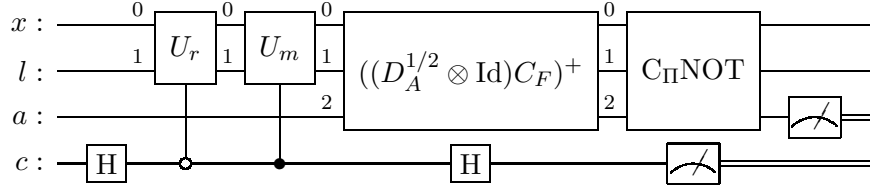

\begin{center}
\includestandalone{circuits/QC}
\end{center}
\caption{ \label{fig:complete-algorithm}
High-level overview of an efficient algorithm to compute the quantity of interest $m^TS^{-1}r$, which can be computed as $P(a = 0, c = 0) - P(a = 0, c = 1)$. The gate labeled $((D_A^{1/2} \otimes \Id_{2^d})C_F)^+$ represents the linear solver.
}
\end{figure}

\begin{remark}[Dependence on $d$]
Note that the condition number $\kappa(F^TSF)$ of the preconditioned finite element system may depend in a critical way on the dimension~$d$ \cite[Remark 3.2.]{Bac23}. To the best of our knowledge, the asymptotic behavior of the BPX preconditioner has not been analyzed with respect to its dependence on $d$ as $d \to \infty$. The best bound for the unpreconditioned system known to us is exponential in $d$, but we do not see this behavior in numerical experiments with low dimension. If instead the condition depends only polynomially on the dimension, then our scheme would give an exponential speed-up with respect to $d$ over classical methods, specifically an asymptotic runtime of $\bigO(\tol^{-1}\poly(d \log \tol^{-1}))$.
\end{remark}

\begin{remark}[Advantage over classical computers]
    The above theorem assumes that $\tol \approx h$, which corresponds to $H^2(D)$ regularity of the solution. In cases of globally low regularity beyond isolated singularities, a more restrictive mesh scaling, $h\approx\tol^{1/s}$ with $s>0$, may be necessary. This means that classical methods require $\bigO(h^{-d/s})$ operations to achieve a solution with the desired accuracy, making a quantum advantage possible as soon as $d > s$. The potential speedup scales as $\tol^{-(1-d/s)}$, becoming more pronounced when $s \ll 1$.
    Our method consistently realizes this potential speedup since the mesh width~$h$ only affects runtime logarithmically. This is the first work to rigorously establish such advantage for the case $s = 1$. Previous methods that achieved runtimes close to $\bigO(\tol^{-1})$, such as \cite{CLO21}, relied on significantly stronger regularity assumptions.
    For higher-regularity problems where $s > 1$ is feasible, the quantum advantage naturally diminishes. 
\end{remark}

\subsection{Optimized quantum circuit} \label{ssec:optimized}

We now turn to our actual implementation of the block encoding for $C_F$. Indeed, this could be done using the operations of Proposition~\ref{prop:block-encoding-ops} and their concrete implementations in Appendix \ref{app:proof}, but this will lead to a large overhead which can mostly be attributed to two reasons. First, the inner projection in the multiplication (see Figure \ref{fig:ops-circuits}) can be avoided in some cases, and with our implementation none of these projections are needed at all. Second, there is often a more efficient way to implement the concatenation of matrices than to use controlled applications of the blocks. Consider, for example, the block matrix
\[ \begin{bmatrix} I_L & N_L & I_{L-1} & \dots & N_1 \end{bmatrix}^T \]
which arises from the construction of Theorem \ref{thm:main2}. The $2\ell$-th block for $1 \le \ell \le L$ is simply encoded by identity, while the $(2\ell+1)$-th block is implemented by increasing the integer encoded in the $(L-\ell)$ highest bits of the work register or equivalently by first shifting the bits of the work register right by the $\ell$ bits, then increasing the complete work register, and then shifting the work register left again by $\ell$ bits. With the latter procedure, all incrementing operations overlap and only one such subcircuit is needed. A similar technique is used to compress the circuit that implements the matrices $T_{\ell,L}$. An overview of the complete quantum circuit for $U_{C_F}$ is given in Figure \ref{fig:complete-block-encoding}. The gates denoted with $R$ are related to the implementation of $R_{\ell,\text{1D}}$ and $C_{\ell,\text{1D}}$, while those denoted with $T$ are related to the implementation of $T_{\ell,\ell+1,\text{1D}}$ from Theorem \ref{thm:main2}.

To solve the linear system and estimate a quantity of interest $m^TS^{-1}r$ we restrict ourselves to the constant coefficient case $D_A = \Id$ and first compute the vectors $C_F^+F^Tr$ and $C_F^+F^Tm$ as explained in Theorem \ref{thm:main3}.
For this we use a quantum linear solver based on the Quantum Singular Value Transformation (QSVT) \cite{GSLW19} to compute the pseudoinverse~$C_F^+$. Although the runtime of this method scales quadratically in $\keff$ it is very efficient in the well-conditioned regime which we consider here.
The QSVT works by transforming the initial state $\ket{\tilde r} \coloneqq F^Tr$ by a polynomial $p$ of the input matrix $C_F$, in other words, one can produce the state $p(C_F)\ket{\tilde r}$. This polynomial must satisfy $|p(z)| \le 1$ for $z \in [-1, 1]$. We aim to approximate the function $g(z) = 1/z$ on the domain $[-1, -\kappa] \cup [\kappa, 1]$ using $p$, since $g(C_F) = C_F^+$ is the pseudoinverse of $C_F$. Following the construction in \cite{GSLW19}, an approximation is realized by the polynomial 
\[ \tilde p(z) = 4\sum_{j = 0}^J (-1)^j\frac{\sum_{k=j+1}^K \binom{2K}{K+k}}{2^{2K}}t_{2j+1}(z),\]
where $t_{2j+1}$ is the Chebyshev polynomial of degree $2j+1$, $j=0,\ldots,J$, and the parameters~$J,K\in\N$ can be chosen appropriately given $\kappa$ and the error tolerance $\tol$. Specifically, \cite{GSLW19} shows the choices 
$$K = \big\lceil \kappa^2 \log(\kappa\tol^{-1}) \big\rceil \quad\text{and}\quad J = \bigg\lceil \sqrt{K \log(4K\tol^{-1})} \bigg\rceil$$
lead to sufficient error bounds. The polynomial $\tilde p$ is then normalized to not exceed the range~$[-1, 1]$, resulting in the final polynomial $p$. For the purpose of the QSVT, the normalized polynomial $p$ is translated into so-called \textit{phase angles}, which we compute using the software \texttt{pyqsp} \cite{MRTC21}.
For technical details, we refer to our complete implementation in the Qiskit framework \cite{Qis23} available at \url{https://github.com/MDeiml/quantum-bpx}.

\begin{figure}
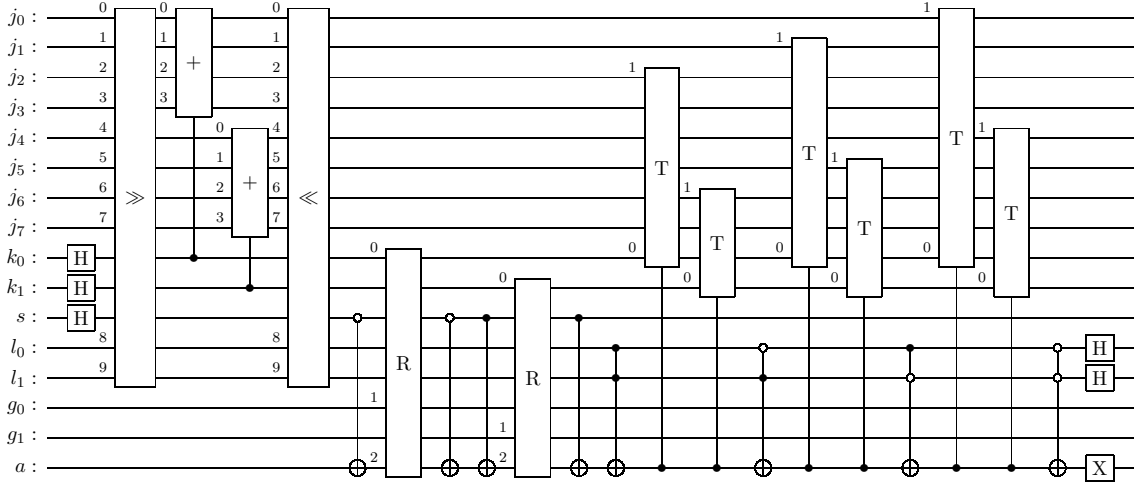

    \hspace{-5ex}
    \includestandalone{circuits/C_P}
    \caption{ \label{fig:complete-block-encoding}
    Block encoding (without $\CNOT{\Pi}$) of $C_F$ for $d = 2$ and $L = 4$}
\end{figure}

%
%
\section{Numerical Experiments}\label{sec:numerics}
In this section, we analyze the proposed method through numerical experiments. First, we execute the quantum circuits on both real and simulated hardware to estimate the impact of noise present in current noisy intermediate-scale quantum (NISQ) hardware. We then compare the performance of our method with preconditioning to that of a method that does not use preconditioning to demonstrate its effectiveness. Both experiments focus on solving the model problem \eqref{eq:modelproblem} with constant coefficients, where $A \equiv 1$ and $f \equiv 1$. We aim to approximate
\begin{equation} \label{eq:quantity-of-interest}
\int_D u \dx = \frac{1}{12}
\end{equation}
which is achieved by setting $m = r = 2^{-L}(1,\dots,1)^T$.

\subsection{Realization on NISQ hardware}
Here, we analyze the dependence of our algorithm on realistic noise, which is unavoidable in current and near-term quantum computers, and address the viability of the practical use of our method.

After validating the correctness of the circuit of Section~\ref{ssec:optimized} using a noiseless simulator, we attempt to compute \eqref{eq:quantity-of-interest} with $L = 4$ on the \texttt{ibm\_sherbrooke} quantum computer, which has a two-qubit gate error of roughly $\epsilon_2 \approx 7 \cdot 10^{-3}$. We choose $K = 27$ for the QSVT solver by the above formula with $\kappa = \kappa(C_F) \approx 2.8$ and $\tol^{-1} = 0.1$, but we vary the number of steps $J$. The results of this experiment can be seen in Figure \ref{fig:noise-example}. Each data point contains $10\,000$ measurement samples. Sadly, the noise seems to dominate, giving a relative error of about $25\%$.

We also report more recent results from the same experiment for $J=2$ performed on the \texttt{ibm\_marrakesh} quantum computer, which has a two-qubit gate error of approximately $\epsilon_2 \approx 3 \times 10^{-3}$. These data were collected for \cite{DP25_mfo} while this manuscript was under review. The reduction in hardware noise translates directly into an improved average relative error of about $13\%$ for this very small sample size.

These results are not surprising, as can be shown by giving a rough estimate of the expected noise. The transpiled circuits are quite compact, with the block encoding of $C_F$ including both $\CNOT{\Pi}$ operations requiring only $13$ qubits and $300$ two-qubit gates for $d = 1$ and $L = 4$. Still, the complete circuit using the QSVT based solver described in Section~\ref{ssec:optimized} requires about $300(2J+1)$ two-qubit gates (ignoring the gates needed to encode the right-hand side).
From these observations, one can estimate the expected error on a quantum computer with realistic noise, showing that a two-qubit gate error of
\[\epsilon_2 \lesssim \frac{0.1}{300 (2J+1)} \approx \frac{3.3 \cdot 10^{-4}}{2J + 1}\]
is needed to reliably obtain an absolute noise level below $0.1$. This is still beyond the capability of the quantum computer used. Even cutting-edge quantum computers only achieve a two-qubit error rate of $\epsilon_2 \approx 1.8 \cdot 10^{-3}$ \cite{MBA+23}, meaning that a further hardware improvement of about one order of magnitude is needed to make our approach practical. This prediction does not take into account the various error mitigation and correction methods that already exist \cite{DMN13,CBB+23} or any future improvements to such methods.

To back up this assessment with a numerical experiment, we simulate our algorithm to compute \eqref{eq:quantity-of-interest} with $L = 4$ without hardware noise, as well as with a two-qubit error ranging from $7 \cdot 10^{-3}$ to $10^{-3}$. The noise for one-qubit gates is scaled $10^{-2}$ times the two-qubit error. For each data point, we compute $200$ runs to estimate a 95\% confidence interval of the predicted quantum solution. Each run contains $10\,000$ measurement samples.
We found that in the presence of noise, the accuracy of the estimate can be improved by deviating from the procedure described in Theorem \ref{thm:main3}. Specifically, by projective measurement of the solution state we obtain $C_F^+F^Tr / |C_F^+F^Tr|$, then apply the block encoding of $C_F$ again, giving a state with norm $1 / |C_F^+F^Tr|$. This can be used to estimate $|C_F^+F^Tr|$ and analogously $|C_F^+F^Tm|$. By computing the inner product of the normalized vectors, we can approximate the quantity of interest. We believe that this improved resilience to noise is due to the projective measurement filtering out a lot of runs where larger noise was introduced. The final application of $C_F$ then introduces fewer additional errors than the computation of the pseudoinverse.
The results can be seen in Figure \ref{fig:noise-example}. For low noise, all data points are within a reasonable distance of the true solution, and the QSVT error is lower for $J = 3,4$ than it is for $J = 2$. However, note that this comes with an increasing error due to noise, with $J = 4$ converging the slowest.
These results nicely show that one could obtain an estimate of the quantity of interest within a relative error of about $0.1$ given a quantum computer with a two-qubit error of $10^{-3}$, which even slightly exceeds our prediction above. 
Although a trivial undertaking on a classical computer, this demonstration of practical feasibility of the method on current hardware represents a significant milestone. Until recently, quantum hardware was unable to process the circuit depth required for even the simplest finite element problem in one dimension. Our results indicate that only minor hardware improvements are required to solve prototypical PDEs with reasonable accuracy.
\begin{figure}
	\begin{tikzpicture}
		\begin{axis}[%
		width=0.88\textwidth,
		height=0.35\textwidth,
		at={(0\textwidth,0\textwidth)},
		scale only axis,
		unbounded coords=jump,
		xlabel style={font=\color{white!15!black}},
		xlabel={two-qubit gate error},
        xmin = 0.00002,
        xmax = 0.01,
        xmode=log,
        ymax = 0.098,
        xtick = {0.00003, 0.0001, 0.001, 0.003, 0.007},
        xticklabels = {noise-free, $10^{-4}$,$10^{-3}$,$3 \cdot 10^{-3}$, $7 \cdot 10^{-3}$},
        xminorticks = false,
        ytick = {0.05, 0.06, 0.07, 0.08, 0.09},
        yticklabels = {$0.05$, , , , $0.09$},
        extra y ticks = {0.08333333333},
        extra y tick labels = {QoI},
        extra y tick style={grid=major},
        scaled ticks=false,
		ylabel style={font=\color{white!15!black}},
		ylabel={result},
		axis background/.style={fill=white},
		legend columns = 4,
		legend style={
			at={(0.98,0.93)},
			anchor=east,
			legend cell align=left,
			align=left,
			draw=white!15!black,
			font=\Small,
            /tikz/every even column/.append style={column sep=1ex}
		},
		]
		\addplot [mark=o, color=mycolor2, only marks, very thick, mark size=2pt, error bars/.cd,
            y dir = both, y explicit, error bar style={line width=0.8pt},
            error mark options={
                rotate=90,
                mark size=4pt,
                line width=0.8pt
            }]
        table [col sep=semicolon, x expr = {\thisrowno{0} / 10^0.05}, y expr = {\thisrowno{1}}, y error plus expr = {(\thisrowno{4} - \thisrowno{1})}, y error minus expr = {(\thisrowno{1} - \thisrowno{5})}] {data/noise_error_3.csv};
		\addplot [mark=square, color=mycolor1, only marks, very thick, mark size=2pt, error bars/.cd,
            y dir = both, y explicit, error bar style={line width=0.8pt},
            error mark options={
                rotate=90,
                mark size=4pt,
                line width=0.8pt
            }]
        table [col sep=semicolon, x expr = {\thisrowno{0}}, y expr = {\thisrowno{1}}, y error plus expr = {(\thisrowno{4} - \thisrowno{1})}, y error minus expr = {(\thisrowno{1} - \thisrowno{5})}] {data/noise_error_4.csv};
  	\addplot [mark=triangle, color=mycolor3, only marks, very thick, mark size=3pt, error bars/.cd,
            y dir = both, y explicit, error bar style={line width=0.8pt},
            error mark options={
                rotate=90,
                mark size=4pt,
                line width=0.8pt
            }]
        table [col sep=semicolon, x expr = {\thisrowno{0} * 10^0.05}, y expr = {\thisrowno{1}}, y error plus expr = {(\thisrowno{4} - \thisrowno{1})}, y error minus expr = {(\thisrowno{1} - \thisrowno{5})}] {data/noise_error_5.csv};
        \addlegendentry{$J = 2$}
        \addlegendentry{$J = 3$}
	      \addlegendentry{$J = 4$}
        
		\addplot [mark=x, color=mycolor2, only marks, mark size=4pt]
        coordinates {(0.007 / 10^0.05, 0.05046828190356076) (0.007 / 10^0.05, 0.05926372798682381) (0.007 / 10^0.05, 0.05769232935536537)};
		\addplot [mark=x, color=mycolor1, only marks, mark size=4pt]
        coordinates {(0.007, 0.0547182710082045) (0.007, 0.07356420041686254) (0.007, 0.054477300723894596)};
		\addplot [mark=x, color=mycolor3, only marks, mark size=4pt]
        coordinates {(0.007 * 10^0.05, 0.05439757784478645) (0.007 * 10^0.05, 0.06915382060608485) (0.007 * 10^0.05, 0.04778295544160281)};
		\addplot [mark=x, color=mycolor2, only marks, mark size=4pt]
        coordinates {(0.003 / 10^0.05, 0.0603373126010106) (0.003 / 10^0.05, 0.06218426626148073) (0.003 / 10^0.05, 0.08300781250000001) (0.003 / 10^0.05, 0.06586251213372406)};
		\end{axis}
        \draw (0,0) node at (1.9, 0.035) {//};
	\end{tikzpicture}%
	\caption{\label{fig:noise-example}
The results for hardware execution and noisy simulation of the algorithm with $J=2,3,4$. The gray line corresponds to the true value of the quantity of interest (QoI) of the PDE solution. Cross markers ($\times$) indicate the outcomes of runs on two different instances of real hardware. Other markers are the empirical expectations of the simulated runs under varying two-qubit gate error in the noise model, while the bars indicate a confidence interval of $95\%$.
	}
\end{figure}

\subsection{Numerical validation of optimal convergence rate}
To evaluate the effectiveness of our preconditioning approach, we compare it with an equivalent method that solves the original linear system $Sc = r$ without preconditioning. This comparison is made by directly computing the effect of the solver, specifically the matrix polynomials $p(C_F)$ and $p(D_A^{1/2}C_L)$—rather than using simulated quantum computers, since the latter limit the maximum discretization due to large memory requirements.

As a model problem, we solve \eqref{eq:modelproblem} in one and two dimensions with the parameters previously specified. Both the preconditioned and non-preconditioned methods are applied to linear systems with increasing discretization levels $L$, ranging from $3$ to $14$ for $d=1$ and to $8$ for $d=2$. The linear solver tolerance is set to $2^{-L}$, and the parameter $\keff$ is optimized to be as low as possible while maintaining this tolerance. The results, shown in Figure \ref{fig:condition}, confirm that the dependence of our method on the inverse error is logarithmic, that is, $J \sim \log \tol^{-1}$, while the non-preconditioned method shows a linear dependence, $J \sim \tol^{-1}$. The results for both $d = 1$ and $d = 2$ show a similar behavior in this regard. It is important to note that this scaling is with respect to the square root of the actual condition number, $\kappa(S) \sim \tol^{-2}$, due to the symmetric factorization outlined in Theorem \ref{thm:main3}, which is applied to both preconditioned and non-preconditioned systems.

In general, one cannot expect quantum linear solvers to outperform a linear dependence on the effective condition number, even for linear systems of relatively small dimension~$N$. This contrasts with classical non-stationary methods, such as the preconditioned conjugate gradient method, which is guaranteed to converge to the exact solution within $N$ steps. This is achievable because the matrix polynomial computed by the CG method is continuously adapted to the system matrix. In contrast, current quantum linear solvers more closely resemble stationary methods, where a fixed polynomial is used for all system matrices \cite{BBC+94}.

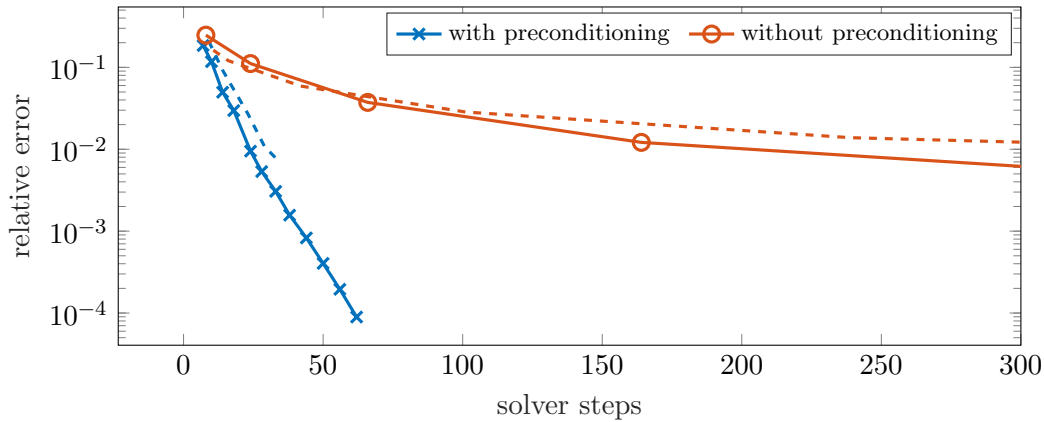
\begin{figure}
	\begin{tikzpicture}
		\begin{axis}[%
		width=0.8\textwidth,
		height=0.3\textwidth,
		at={(0\textwidth,0\textwidth)},
		scale only axis,
		unbounded coords=jump,
        xmax=300,
        ymode=log,
		xlabel style={font=\color{white!15!black}},
		xlabel={solver steps},
		ylabel style={font=\color{white!15!black}},
		ylabel={relative error},
		axis background/.style={fill=white},
		legend columns = 4,
		legend style={
			at={(0.99,0.92)},
			anchor=east,
			legend cell align=left,
			align=left,
			draw=white!15!black,
			font=\Small,
            /tikz/every even column/.append style={column sep=1ex}
		},
		]
		\addplot [mark=x, color=mycolor1, very thick, mark size=3pt]
        table [col sep=semicolon, x index = {2}, y index = {3}] {data/condition.csv};
        \addlegendentry{with preconditioning}
		\addplot [mark=o, color=mycolor2, very thick, mark size=3pt]
        table [col sep=semicolon, x index = {0}, y index = {1}, skip coords between index={8}{12}] {data/condition.csv};
        \addlegendentry{without preconditioning}
		\addplot [dashed, color=mycolor1, very thick, mark size=3pt]
        table [col sep=semicolon, x index = {6}, y index = {7}, skip coords between index={6}{12}] {data/condition.csv};
		\addplot [dashed, color=mycolor2, very thick, mark size=3pt]
        table [col sep=semicolon, x index = {4}, y index = {5}, skip coords between index={6}{12}] {data/condition.csv};
  
		\end{axis}
	\end{tikzpicture}%
	\caption{\label{fig:condition}
Comparison in needed solver steps of our method with a non-preconditioned approach for $d = 1$ (solid line) and $d = 2$ (dashed line). The graph includes the error from discretization and the quantum solver, but not errors due to hardware noise. The dependence on the error of our preconditioned approach is logarithmic, while the dependence for the non-preconditioned method is linear.
	}
\end{figure}

%
%
\section{Conclusion}
In this research article, we have demonstrated the feasibility of the finite element method for quantum computing, achieving a significant speed-up by optimizing the runtime of our proposed algorithm through the integration of the well-established BPX preconditioner. The method not only realizes optimal performance in terms of error tolerance, a prospect previously only theorized \cite{MP16}, but is also characterized by its low quantum circuit depth, bringing it excitingly close to the abilities of real quantum computers.

However, one aspect that remains to be fully explored is how the efficiency of the algorithm is affected by high dimensions. While this question will hopefully be answered empirically on real quantum computers in the not too distant future, its mathematical understanding requires novel and refined techniques of numerical analysis to obtain dimension-robust condition number estimates of preconditioned finite element discretizations, which have so far been explored up to dimension three or accepting exponentially growing dimension dependencies. Therefore, future work should aim to clarify the pre-asymptotic behavior of our algorithm in high dimensions.

%
%
\bibliographystyle{amsplain}
\bibliography{references}

\appendix

\section{Compressed encoding of finite element vectors} \label{app:implementation-vectors}

To solve a linear system on a quantum computer, we need to encode the right-hand side~$\ket{r}$ and possibly the vector $\ket{m}$ for which we want to measure $m^Tc$. We focus here on $r$, but the construction of $m$ follows exactly the same techniques. The way $r$ should be provided is as a quantum circuit $U_r$, which we require to map the zero state to $\ket{r}=\ket{r}_K$. Using the method independently discovered in \cite{GR02,KM04,Zal98}, such a circuit can be constructed from $K$ quantum circuits $U_r^{(1)},\ldots,U_r^{(K)}$, which realize the functions
\[x \in \{0, 1\}^k \mapsto g_k(x) \coloneqq \sum_{y \in \{0, 1\}^{K-k}} |r_{x \cdot y}|^2 \in \R\]
for $1 \le k \le K$. Here, the operation $\cdot$ denotes the concatenation of bit strings and $x \cdot y$ is to be understood as an integer index (indexing starts at zero). More precisely, $U_r^{(k)}$ implements the mapping
\[\ket{x}_{k-1}\ket{0} \mapsto \ket{x}_{k-1}\Bigg(\sqrt{\frac{g_k(x \cdot 0)}{g_{k-1}(x)}}\ket{0} + \sqrt{\frac{g_k(x \cdot 1)}{g_{k-1}(x)}}\ket{1}\Bigg).\]
The behavior of the mapping on the state $\ket{x}\ket{1}$ may be arbitrary. The same is true if the denominator $g_{k-1}(x)$ is zero.
If some of the components of $r$ are negative, we also need an additional oracle $U_r^\pm$ that flips the phase of the corresponding states. Then one can easily check that
\[U_r \coloneqq U_r^\pm \circ U_r^{(K)} \circ (I_1 \otimes U_r^{(K-1)}) \circ \dots \circ (I_{K-1} \otimes U_r^{(1)})\]
has the desired properties.

In the finite element context, we usually want to construct a vector representation $r$ of the bounded linear functional $f^*$ acting on finite element functions. This functional is given by its Riesz representation $f$ in the space $L^2$ and the components of $r$ are computed as $r_j = (\Lambda_j^{(L)}, f)_{L^2}$. Here $f$ is usually either provided as a symbolic formula, or constructed from some real world data such as measurements. It is reasonable to assume that these inner products can be computed efficiently and that the same is true for inner products with basis functions on coarser levels $(\Lambda_j^{(\ell)}, f)_{L^2}$ for $1 \le \ell \le L$ and $0 \le j \le (2^\ell - 1)^d$, but clearly the computation must scale at least linearly with the amount of data contained in $f$. For a typical use case, this will not change the asymptotic runtime of the algorithm, since the amount of data is almost always constant and independent of the discretization parameters.

The computation of the described inner products is (almost) sufficient to efficiently compute even the preconditioned vector $\ket{F^Tr}$. First, we describe the non-preconditioned case. There we see that $g_k$ can be computed as
\[ g_k(x) = \sum_{y\in\{0,1\}^{dL-k}} (\Lambda_{x \cdot y}^{(L)}, f)_{L^2} = (\Lambda_{k,x}^{(L)}, f)_{L^2}. \]
where for $1 \le \ell \le L$, $1 \le k \le d\ell$, and $0 \le x \le (2^k-1)^d$ we define
\[ \Lambda_{k,x}^{(\ell)} \coloneqq \sum_{y = 2^{d\ell-k}x}^{2^{d\ell-k}(x+1)-1}\Lambda_y^{(\ell)}. \]
A visualization of this function can be seen in Figure \ref{fig:basis-function-sums}. It is unclear whether the inner products with these functions could be computed efficiently from the inner products with the basis functions, but one can see that each $\Lambda_{k,x}^{(\ell)}$ is a sum of ``half'' basis functions, and we present a visual proof in Figure \ref{fig:basis-function-sums} in the one-dimensional case.

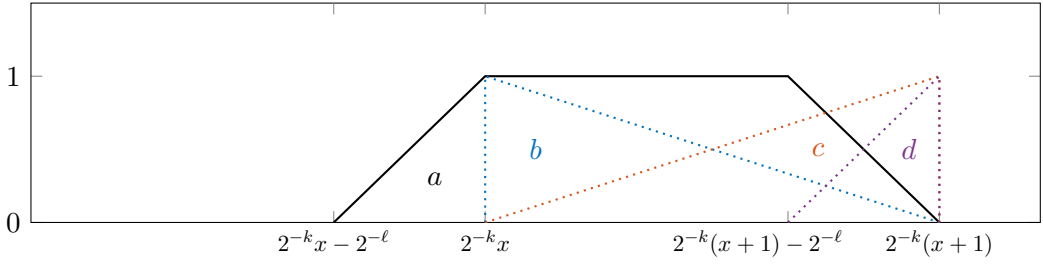
\begin{figure}
\begin{tikzpicture}
    \begin{axis}[
    width = \textwidth,
    height = 0.3\textwidth,
    xmin = 0,
    xmax = 1,
    ymin = 0,
    ymax = 1.5,
    ytick = {0, 1},
    xtick = \empty,
    extra x tick style={tick label style={scale=0.8}},
    extra x ticks  = {0.3,0.45,0.75,0.9},
    extra x tick labels  = {$2^{-k}x - 2^{-\ell}$, $2^{-k}x$,$2^{-k}(x+1)-2^{-\ell}$\hspace*{1cm},$2^{-k}(x+1)$},
    ]
    \addplot[thick] coordinates {
        (0.3, 0)
        (0.45, 1)
        (0.75, 1)
        (0.9, 0)
    };
    \addplot[mycolor1, thick, dotted] coordinates {
        (0.45, 0)
        (0.45, 1)
        (0.9, 0)
    };
    \addplot[mycolor2, thick, dotted] coordinates {
        (0.45, 0)
        (0.9, 1)
        (0.9, 0)
    };
    \addplot[mycolor3, thick, dotted] coordinates {
        (0.75, 0)
        (0.9, 1)
        (0.9, 0)
    };
	\node [] at (axis cs:0.4,0.3) {$a$};
	\node [mycolor1] at (axis cs:0.5,0.5) {$b$};
	\node [mycolor2] at (axis cs:0.78,0.5) {$c$};
	\node [mycolor3] at (axis cs:0.87,0.5) {$d$};
    \end{axis}
    
\end{tikzpicture}
\caption{ \label{fig:basis-function-sums}
Visualization of the function $\Lambda_{k,x}^{(\ell)}$ as the sum of $4$ ``half'' basis functions $a + b + c - d$.
}
\end{figure}

For the preconditioned case, the preparation of the states for the right hand side is more complicated due to the fact that we have to assemble the preconditioned state $\ket{F^Tr}$ directly. For simplicity, let us assume that there is some $\tilde L \in \N$ such that $L = 2^{\tilde L}$. For the computation of $g_k$ we then consider two cases separately. First, for $1 \le k \le \tilde L$ the variable $x$ in each $g_k(x)$ refers only to the grid level $\ell$ on which information is acquired. Instead of giving an explicit formula for $g_k$, it is sufficient to understand that the combined transformation $O_r^{(\tilde L)} \circ \dots \circ O_r^{(1)}$ is given by
\[ \ket{0}_{\tilde L} \mapsto \frac{1}{\omega} \sum_{\ell=1}^L 2^{-\ell}(\Lambda_{\ell,0}^{(\ell)}, f)_{L^2} \ket{\ell} \]
where $\omega \coloneqq (\sum_{\ell=1}^L 2^{-2\ell}(\Lambda_{\ell,0}^{(\ell)}, f)_{L^2}^2)^{-1/2}$ is a normalization factor. The total time to compute this is $\bigO(L)$, since it is a sum of $L$ summands, each of which can be computed in $\bigO(1)$ time.

For $\tilde L + 1 \le k \le dL + \tilde L$ on the other hand we interpret $x$ as consisting of the level $\ell$ and a spatial part $z$. The computation is almost the same as in the non-preconditioned case, differing only in the diagonal scaling of the preconditioner
\[ g_k(\ell, z) = 2^{-\ell}(\Lambda_{k-\tilde L,z}^{(\ell)}, f)_{L^2} \]
where the circuits will behave correctly if we artificially define $g_{\tilde L} \equiv 1$. Obviously, from the previous considerations, this can also be computed in time $\bigO(1)$, i.e.\ for the entire encoding procedure we obtain a runtime of $\bigO(L)$.

It has been shown that polynomials satisfy the requirements of this construction \cite{MP16} and as such can be efficiently encoded. Similarly, one can check that the same is true for the basis functions $\Lambda_j^{(\ell)}$ and linear combinations thereof. However, a detailed analysis of the kind of functions that can be encoded does not exist to the best of our knowledge.

\section{Proof of Proposition~\ref{prop:block-encoding-ops}}\label{app:proof}

\begin{figure}[p]
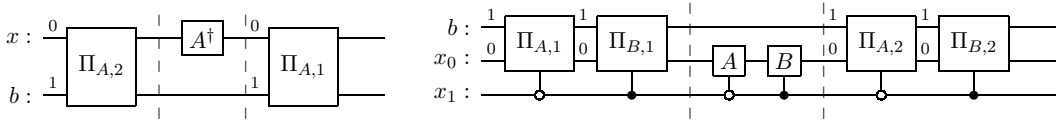
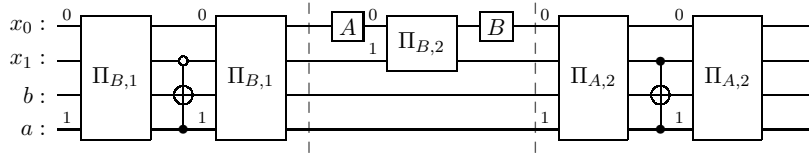
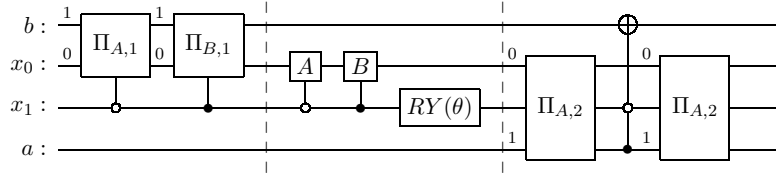
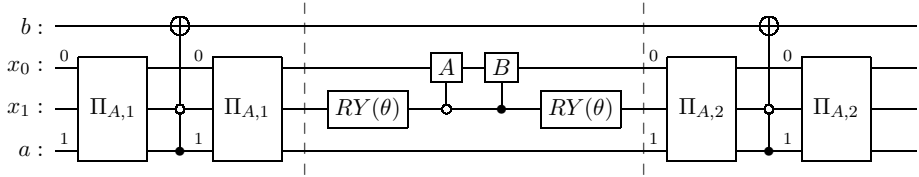

\begin{subfigure}[b]{\textwidth}
    \centering
    \includestandalone{circuits/ops/tensor}
    \caption{Block encoding for $A \otimes B$}
\end{subfigure}
\begin{subfigure}[b]{.34\textwidth}
    \includestandalone{circuits/ops/transpose}
    \hspace{-4ex}
    \vspace{-4.6ex}
    \subcaption{Block encoding for $A^\dagger$\vphantom{$\begin{bmatrix}A & 0 \\ 0 & B \end{bmatrix}$}}
\end{subfigure}
\begin{subfigure}[b]{.65\textwidth}
    \centering
    \includestandalone{circuits/ops/block_diagonal}
    \subcaption{Block encoding for $\begin{bmatrix}A & 0 \\ 0 & B \end{bmatrix}$}
\end{subfigure}
\begin{subfigure}[b]{\textwidth}
    \centering
    \includestandalone{circuits/ops/multiply}
    \caption{Block encoding for $AB$}
\end{subfigure}
\begin{subfigure}[b]{\textwidth}
    \centering
    \includestandalone{circuits/ops/block_horizontal}
    \caption{Block encoding for $\begin{bmatrix}A & B \end{bmatrix}$ where $\theta = \arctan(\gamma_B / \gamma_A)$}
\end{subfigure}
\begin{subfigure}[b]{\textwidth}
    \centering
    \includestandalone{circuits/ops/add}
    \caption{Block encoding for $\mu_A A + \mu_B B$ where $\theta = \arctan(\sqrt{\gamma_B\mu_B / (\gamma_A\mu_A)})$}
\end{subfigure}
\caption{
\label{fig:ops-circuits}
Circuits of block encoding for the operations of Proposition~\ref{prop:block-encoding-ops}. Implementations for both the unitary and corresponding $\CNOT{\Pi}$ gates are shown.
}
\end{figure}
To prove the proposition, Figure~\ref{fig:ops-circuits} gives quantum circuits that implement the block encodings for each operation with the exact normalization $\gamma$. Additionally, the bounds on the subnormalization $\tilde \gamma$ follow from lower bounds on the norm of the resulting matrix as follows.

     $A \otimes B$:
    An easy calculation shows that
    \begin{align*}
    &\gamma_A \gamma_B (\Pi_{A,2} \otimes \Pi_{B,2}) (U_A \otimes U_B) (\Pi_{A,1} \otimes \Pi_{B,1})^\dagger \\
    &\qquad=
    (\gamma_A \Pi_{A,2} U_A \Pi_{A,1}^\dagger) \otimes (\gamma_B \Pi_{B,2} U_B \Pi_{B,1}^\dagger)
    =
    A \otimes B
    \end{align*}
    and we can exactly calculate the norm as $|A \otimes B| = |A|\, |B|$. For constructing the $\CNOT{\Pi}$ operations we use that a state is in the range of a projection $\Pi_{A,\bullet} \otimes \Pi_{B,\bullet}$ exactly if it is in the range of $\Pi_{A,\bullet} \otimes \Id$ and in the range of $\Id \otimes \Pi_{B,\bullet}$. For computing the logical AND of two boolean values in a quantum computer, the Toffoli gate is used. The two original values need to be uncomputed subsequently.
    
     $A^\dagger$:
    An easy calculation shows that
    \begin{align*}
    &\gamma_A \Pi_{A,1} U_A^\dagger \Pi_{A,2}^\dagger =
    (\gamma_A \Pi_{A,2} U_A \Pi_{A,1}^\dagger)^\dagger
    =
    A^\dagger
    \end{align*}
    and we can exactly calculate the norm as $|A^\dagger| = |A|$.
    
     $\begin{bmatrix} A & 0 \\ 0 & B\end{bmatrix} \eqqcolon D$:
    Let us first assume that the block encodings for both $A$ and $B$ have the same normalization $\gamma_A = \gamma_B$. Then we can calculate
    \begin{align*}
    &\gamma_A (\Pi_{A,2} \otimes \op00 + \Pi_{B,2} \otimes \op11)(U_A \otimes \op00 + U_B \otimes \op11) \\
    &\qquad (\Pi_{A,1} \otimes \op00 + \Pi_{B,1} \otimes \op11)^\dagger \\
    &\qquad =
    (\gamma_A \Pi_{A,2} U_A \Pi_{A,1}^\dagger) \otimes \op00 + (\gamma_B \Pi_{A,2} U_A \Pi_{A,1}^\dagger) \otimes \op11 = D
    \end{align*}
    If the encodings do not have the same normalization, then it is possible to increase the lower of the two normalizations, though this needs an additional qubit. For the matrix norm we know that $|D| = \max\{|A|, |B|\}$.
    
     $AB$:
    Indeed the construction works without the extra requirement on the size of the matrices, which instead is only needed to bound the subnormalization. The validity of the encoding follows from
    \begin{align*}
    &\gamma_A\gamma_B (\Pi_{A,2} \otimes \bra{1}) (U_A \otimes \Id_2) \CNOT{\Pi_{A,1}} (U_B \otimes \Id_2) (\Pi_{B,1} \otimes \bra{0})^\dagger \\
    &\qquad=
    \gamma_A\gamma_B \Pi_{A,2} U_A (\Id \otimes \bra{1}) \CNOT{\Pi_{A,1}} (\Id \otimes \ket{0}) U_B \Pi_{B,1}^\dagger \\
    &\qquad=
    (\gamma_A \Pi_{A,2} U_A \Pi_{A,1}^\dagger) (\gamma_B \Pi_{B,2} U_B \Pi_{B,1}^\dagger)
    = AB
    \end{align*}
    where we used that by definition $(\Id \otimes \bra{1}) \CNOT{\Pi_{A,1}} (\Id \otimes \ket{0}) = \Pi_{A,1}^\dagger\Pi_{A,1}$. The projections $\Pi_\bullet \otimes \bra{0}$ can be constructed similar to the ones in the $A \otimes B$ case. Finally we need a lower bound on the matrix norm of the product. For this consider
    \[ |A|\, |B| \le |AB| \, |B^+|\, |B| = \kappa(B) |AB| \]
    where we have to assume that $n_B \ge m_B$ and equivalently we get $\kappa(A)|AB| \ge |A|\, |B|$ if $m_A \ge n_A$. The bounds for the subnormalization follow by straightforward calculation.
    
    $\begin{bmatrix} A & B\end{bmatrix}$:
    This matrix can be constructed from a block diagonal using the equality
    \[ \begin{bmatrix} A & B \end{bmatrix} = \big(\begin{bmatrix} \gamma_A & \gamma_B \end{bmatrix} \otimes \Id\big) \begin{bmatrix} A/\gamma_A & 0 \\ 0 & B/\gamma_B \end{bmatrix} \]
    where $\begin{bmatrix} \gamma_A & \gamma_B \end{bmatrix}$ is encoded by a $Y$-rotation with normalization $\sqrt{\gamma_A^2 + \gamma_B^2}$ and the block square matrix due to the scaling has normalization $1$. The subnormalization can be calculated using
    \[ \big|\begin{bmatrix} A & B \end{bmatrix}\big| \le \sqrt{|A|^2 + |B|^2} \]

     $\mu_A A + \mu_B B$:
    This can similarly be constructed from a block diagonal by
    \[ \mu_A A + \mu_B B = \big(\begin{bmatrix} \sqrt{\gamma_A\mu_A} & \sqrt{\gamma_B\mu_B} \end{bmatrix} \otimes \Id\big) \begin{bmatrix} A/\gamma_A & 0 \\ 0 & B/\gamma_B \end{bmatrix} \bigg(\begin{bmatrix} \sqrt{\gamma_A\mu_A} \\ \sqrt{\gamma_B\mu_B} \end{bmatrix}\otimes \Id\bigg) \]
    where the left and right vectors lead to a total normalization of $\gamma_A\mu_A + \gamma_B\mu_B$. As mentioned, the matrix norm of a sum can in general not be bounded from below.

\end{document}